\documentclass{article}

\usepackage[main, final]{neurips_2026}

\makeatletter
\def\@notice{}

\def\ps@headings{%
    \def\@oddhead{\hfill \hfill} 
    \def\@evenhead{\hfill \hfill}
    \def\@oddfoot{\hfill\thepage\hfill}
    \def\@evenfoot{\hfill\thepage\hfill}}

\def\ps@empty{%
    \def\@oddhead{\hfill \hfill} 
    \def\@evenhead{\hfill \hfill}
    \def\@oddfoot{\hfill\thepage\hfill}
    \def\@evenfoot{\hfill\thepage\hfill}}
\makeatother

\usepackage[utf8]{inputenc}
\usepackage[T1]{fontenc}
\usepackage{microtype}

\usepackage{hyperref}
\usepackage{url}
\usepackage{xspace}

\usepackage{amsmath}
\usepackage{amsfonts}
\usepackage{amssymb}
\usepackage{amsthm}
\usepackage{nicefrac}

\usepackage{graphicx}
\usepackage{subcaption}

\usepackage{booktabs}
\usepackage{multirow}
\usepackage{multicol}
\usepackage{array}
\usepackage[table]{xcolor}   
\usepackage{colortbl}

\usepackage{algorithm}
\usepackage{algpseudocode}

\definecolor{tableheader}{RGB}{46,134,171}
\definecolor{tablerowalt}{RGB}{245,248,250}
\definecolor{bestresult}{RGB}{46,134,171}

\newcommand{\best}[1]{\textbf{#1}}
\newcommand{\second}[1]{\underline{#1}}
\newcommand{\MARS}{{MARS}\xspace}
\newcommand{\MARSv}{{MARS-T}\xspace}
\newcommand{\MARSm}{{MARS-M}\xspace}

\newtheorem{theorem}{Theorem}

\newtheorem{proposition}[theorem]{Proposition}

\usepackage{wrapfig}
\usepackage{float}

\title{\MARS: Multi-rate Aggregation of Recency Signals for Sequential Recommendation across Sparse and Dense Regimes}

\author{%
  Zhenyu Yu$^{1}$, Shuigeng Zhou$^{1}$\thanks{Corresponding author.} \\
  $^{1}$College of Computer Science and Artificial Intelligence, 
  Fudan University\\
  \texttt{yuzhenyuyxl@foxmail.com, sgzhou@fudan.edu.cn} 
}

\begin{document}

\maketitle

\begin{abstract}
Sequential recommenders weight historical interactions either through positional self-attention as in Transformers or through a single implicit decay schedule as in State-Space Models. Neither makes the multi-scale temporal structure of real user behaviour explicit. We propose \MARS, an encoder-agnostic aggregation operator that consumes real timestamps and produces $K$ summaries emphasising distinct recency scales, fused by a context-adaptive gate. \MARS adds at most $6\%$ parameters and runs in $\mathcal{O}(LdK)$ time. \MARS\ adapts to data density by automatically selecting between two encoder instantiations: \MARSv\ (Transformer) for sparse data and \MARSm\ (Mamba) for dense data, based on the average sequence length of the training set. On five public benchmarks against ten Transformer- and Mamba-based baselines under a unified RecBole protocol, \MARS\ attains the best HR@10 on every benchmark, with mean relative gain $+19.7\%$ over the strongest content-only Transformer baseline on sparse data (reaching $+36.2\%$ on Games) and $+3.2\%$ HR@10 / $+0.9\%$ NDCG over SIGMA on dense ML-1M at $42\%$ fewer MFLOPs, occupying the accuracy-efficiency Pareto frontier across the data-density spectrum. A backbone-only ablation isolates the marginal contribution of \MARS at $+4\%$ to $+19\%$ HR@10 on sparse data and motivates the dual-instantiation design. The code is included in the supplementary material.
\end{abstract}

\section{Introduction}
\label{sec:intro}

Sequential recommendation predicts the next item a user will interact with given their past interaction history. The temporal structure of such histories is heterogeneous; user interests evolve across multiple time-scales, from session-level bursts within minutes to long-term preferences spanning months. Capturing this multi-scale structure requires the model to combine fine-grained local patterns with coarse-grained global trends. Two families of sequential recommenders dominate the literature, Transformers~\citep{kang2018sasrec,sun2019bert4rec,du2023fearec} and State-Space Models~\citep{gu2023mamba,liu2024mamba4rec,liu2025sigma}. Both compress the multi-scale structure into a single mechanism. Transformers rely on ordinal positional encodings, with time-aware variants~\citep{li2020tisasrec} adding one global relative-time parameterisation. State-Space Models implement a single implicit decay schedule determined by their learned state-transition parameters. Neither family exposes the multiple time-scales the data exhibits, leaving a gap between the modelling assumption and the underlying data-generating process.

We trace this limitation to a conflation of two design decisions that are in principle orthogonal, the \emph{sequence encoder} that maps the items $\{x_1,\dots,x_i\}$ to a hidden state $\mathbf{h}_i$, and the \emph{time-aware aggregation} that pools the hidden states $\{\mathbf{h}_1,\dots,\mathbf{h}_L\}$ into a user representation $\mathbf{h}_u$. The dominant paradigm bundles them. The encoder simultaneously performs content mixing through dot-product attention or selective state updates and temporal weighting through positional embeddings, relative-time biases, or state-transition decays. The optimal architecture for sequence mixing need not coincide with the optimal parameterisation of multi-scale temporal weighting, so coupling the two restricts the design space and forces a single sequence encoder to absorb both responsibilities.

We propose to decouple the two responsibilities by introducing time-aware aggregation as a small learnable module placed on top of any sequence encoder. We realise this idea in \MARS, an aggregation operator that consumes the encoder's hidden states $\mathbf{H}=[\mathbf{h}_1,\dots,\mathbf{h}_L]$, interpreted as the recency signals of past interactions, together with the real timestamps $\{t_1,\dots,t_L\}$ of the events. \MARS produces $K$ summaries of $\mathbf{H}$, each emphasising a distinct recency scale through a learnable exponential decay rate $\lambda_k$, and fuses them into a final user representation through a context-adaptive gate. Three design choices distinguish \MARS from prior time-aware sequential models. The decay rates $\{\lambda_k\}$ are modulated per user via a small MLP conditioned on the sequence summary, allowing distinct users to emphasise distinct time-scales. A Jensen--Shannon diversity penalty on the per-head attention distributions prevents the $K$ heads from collapsing onto a single rate, an issue we analyse through a Hawkes-process generative model. A context-adaptive gate conditioned on the sequence content learns to route between the $K$ summaries on a per-user basis. The aggregated user vector is residually added to the encoder's last-position state, preserving the standard sequential-recommendation interface and allowing \MARS to be inserted into any existing pipeline without architectural changes downstream.

\textbf{Contributions.}
\begin{itemize}
\item \textbf{A new design axis.} We identify time-aware aggregation as a design axis separate from the sequence encoder, and propose \MARS, a backbone-agnostic operator with user-conditioned decay, JSD-regularised diversity, and an adaptive fusion gate.
\item \textbf{Theoretical guarantees.} We provide a three-part theoretical justification. $K$-mixtures of exponentials are universal approximators of monotone decay kernels (Proposition~\ref{prop:approx}). Under a Hawkes-process data-generating model, the diversity-regularised \MARS recovers the ground-truth decay rates up to permutation (Proposition~\ref{prop:identif}, the Hawkes connection is novel to this work). \MARS runs in $\mathcal{O}(LdK)$ time (Proposition~\ref{prop:complexity}).
\item \textbf{Unified framework.} We instantiate \MARS\ as a unified framework with two complementary backbones, \MARSv\ (Transformer) for sparse data and \MARSm\ (Mamba) for dense data. Together they attain the best HR@10 on all five public benchmarks under a unified evaluation protocol, with a backbone-only ablation that quantifies the regime-dependent marginal contribution of \MARS\ and validates the dual-instantiation design.
\end{itemize}

\section{Related Work}
\label{sec:related}

\paragraph{Sequence encoders.}
Sequential recommendation is dominated by two families. Self-attention encoders such as SASRec~\citep{kang2018sasrec}, BERT4Rec~\citep{sun2019bert4rec}, and frequency-augmented variants like FEARec~\citep{du2023fearec} parameterise pairwise interactions through dot-product attention with positional embeddings. State-Space Models~\citep{gu2023mamba}, ported to recommendation by Mamba4Rec~\citep{liu2024mamba4rec} and extended with bidirectional gating in SIGMA~\citep{liu2025sigma}, replace attention with selective state updates whose implicit decay is determined by a single set of state-transition parameters per layer. Both families compress the multi-scale temporal structure of user histories into a single mechanism inside the encoder.

\paragraph{Time-aware sequential recommenders.}
A separate line conditions the encoder on inter-event time. TiSASRec~\citep{li2020tisasrec} augments self-attention with a learnable embedding of relative time intervals, and MEANTIME~\citep{cho2020meantime} attaches a small bank of temporal embeddings to different attention heads. Both inject time \emph{inside} the encoder and share a single global parameterisation of the temporal kernel. Multi-rate temporal weighting has been explored in adjacent domains: Gradformer~\citep{liu2024gradformer} applies per-head exponential decay masks to graph transformers. \MARS imports the multi-rate principle into sequential recommendation while differing from prior time-aware methods on three axes: it operates as a post-encoder aggregator rather than an in-encoder modification; it parameterises $K$ user-conditioned exponential decays rather than a single shared schedule; and it remains agnostic to the underlying sequence encoder.

\paragraph{Aggregation strategies.}
Sequential recommenders also differ in how the per-position hidden states $\{\mathbf{h}_1,\dots,\mathbf{h}_L\}$ are reduced to a single user vector. The simplest choice is to read out the last position (SASRec~\citep{kang2018sasrec}, FEARec~\citep{du2023fearec}, Mamba4Rec~\citep{liu2024mamba4rec}). BERT4Rec~\citep{sun2019bert4rec} trains the encoder to recover masked tokens from bidirectional context, MEANTIME~\citep{cho2020meantime} averages multiple time-conditioned attention heads, and SIGMA~\citep{liu2025sigma} gates the bidirectional Mamba output before pooling. None of these aggregators consume real timestamps as a direct input; the temporal information they exploit is whatever signal the encoder has already propagated through its hidden states. \MARS departs from this convention by treating aggregation as a small parametric module that consumes timestamps explicitly and produces $K$ recency-conditioned summaries fused by a context-adaptive gate. Table~\ref{tab:related} contrasts \MARS with the closest prior work along five such design axes.

\begin{table}[t]
\centering
\caption{Sequential recommenders compared along five design axes for time-aware modelling.}
\label{tab:related}
\footnotesize
\setlength{\tabcolsep}{4pt}
\begin{tabular}{lccccc}
\toprule
\textbf{Method} & \textbf{Time} & \textbf{\#Decay} & \textbf{User-cond.} & \textbf{Fusion} & \textbf{Backbone} \\
\midrule
SASRec~\citep{kang2018sasrec}      & position       & $\times$        & $\times$    & last-token    & Trans. \\
BERT4Rec~\citep{sun2019bert4rec}   & position       & $\times$        & $\times$    & masked LM     & Trans. \\
FEARec~\citep{du2023fearec}        & frequency      & $\times$        & $\times$    & last-token    & Trans. \\
TiSASRec~\citep{li2020tisasrec}    & rel.\ interval & 1               & $\times$    & last-token    & Trans. \\
MEANTIME~\citep{cho2020meantime}   & multi-temp.    & per-head        & $\times$    & head avg.     & Trans. \\
Mamba4Rec~\citep{liu2024mamba4rec} & implicit       & 1/layer         & $\times$    & last-token    & Mamba \\
SIGMA~\citep{liu2025sigma}         & implicit       & bi-dir.         & $\times$    & gated         & Mamba \\
\midrule
\textbf{\MARS (Ours)}    & timestamp      & $K$ rates       & $\checkmark$ & adaptive gate & Trans./Mamba \\
\bottomrule
\end{tabular}
\end{table}

\section{Method: \MARS}
\label{sec:method}

\subsection{Problem Definition}
\label{sec:prelim}

Let $\mathcal{U}$ and $\mathcal{V}$ denote the user and item sets. A user $u\in\mathcal{U}$ has an ordered interaction history $S_u=\{(v_i, t_i)\}_{i=1}^{L_u}$, where $v_i\in\mathcal{V}$ is the $i$-th interacted item, $t_i$ is its UNIX timestamp, and $L_u$ is the user's history length. Histories are padded or truncated to a fixed maximum length $L$. The sequential recommendation task is to predict $v_{L_u+1}$, ranking items in $\mathcal{V}$ by predicted compatibility with the user's history. Throughout the paper we use $d$ for the hidden dimension, $K$ for the number of decay heads, $\mathbf{H}=[\mathbf{h}_1,\ldots,\mathbf{h}_L]\in\mathbb{R}^{L\times d}$ for the encoder output, $\Delta t_i = t_{L_u}-t_i \geq 0$ for the elapsed time between event $i$ and the latest event $L_u$, and $\mathbf{m}\in\{0,1\}^L$ for the binary padding mask that takes value $0$ at padded positions.

\subsection{Overview}
\label{sec:overview}

The architecture of \MARS is shown in Figure~\ref{fig:arch}. A sequence encoder transforms the item history into hidden states $\mathbf{H}\in\mathbb{R}^{L\times d}$, which the post-encoder \MARS module processes in three stages: per-user multi-rate recency aggregation that produces $K$ time-scale summaries (Section~\ref{sec:agg}), context-adaptive fusion that combines them into a single user vector (Section~\ref{sec:fuse}), and joint optimisation of all parameters with two auxiliary regularisers (Section~\ref{sec:opt}). Two encoder instantiations are studied under an identical protocol: \MARSv stacks $N$ Transformer blocks with causal masking on top of a learnable positional embedding, inheriting the configuration of SASRec~\citep{kang2018sasrec}; \MARSm replaces the Transformer blocks with $N$ Mamba blocks following Mamba4Rec~\citep{liu2024mamba4rec}, with a feed-forward sub-layer per block. We denote $\mathbf{h}_{\text{last}}\in\mathbb{R}^d$ the row of $\mathbf{H}$ at the user's last unmasked position $L_u$.

\begin{figure}[t]
\centering
\includegraphics[width=0.9\textwidth]{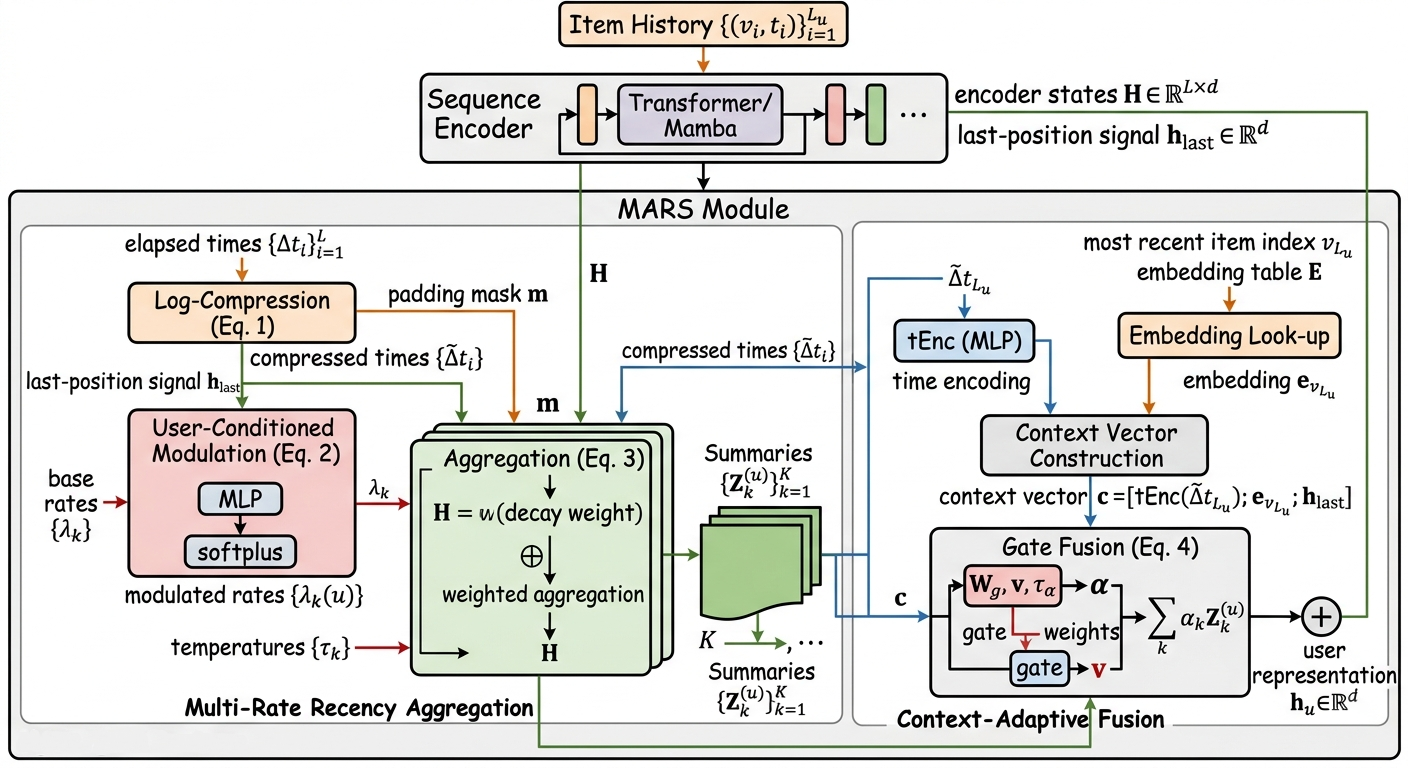}
\caption{\MARS architecture. A sequence encoder produces hidden states $\mathbf{H}$. \MARS aggregates them into $K$ recency-conditioned summaries via learnable rates $\{\lambda_k\}_{k=1}^K$ that are modulated per user, then fuses them into the user vector $\mathbf{h}_u$ through a context-adaptive gate.}
\label{fig:arch}
\end{figure}

\subsection{Multi-Rate Recency Aggregation}
\label{sec:agg}

\paragraph{Per-head decay weighting.}
The elapsed times $\{\Delta t_i\}_{i=1}^{L}$ are first log-compressed to
\begin{equation}
\tilde{\Delta}t_i \;=\; \log\!\bigl(1 + \Delta t_i / \tau_0\bigr),
\label{eq:logtime}
\end{equation}
with $\tau_0$ fixed at one hour. The compression keeps the dynamic range bounded across data regimes whose inter-event gaps may span several orders of magnitude; padding positions inherit $\tilde{\Delta}t_i=0$. Each of the $K$ decay heads carries two scalar parameters, a base rate $\lambda_k=\mathrm{softplus}(\tilde\lambda_k)$ and a temperature $\tau_k=\mathrm{softplus}(\tilde\tau_k)+0.1$. The unconstrained parameter $\tilde\lambda_k$ is initialised on a linspace from $-1$ to $+2$, so the heads begin from a diverse decomposition of the time axis.

\paragraph{User-conditioned modulation.}
To inject per-user heterogeneity without per-user parameters, the base rate of head $k$ is modulated by a small MLP applied to the user summary $\mathbf{h}_{\text{last}}$,
\begin{equation}
\lambda_k(u) \;=\; \lambda_k \cdot \exp\!\bigl(\sigma\cdot \tanh(\mathbf{W}_\lambda\,\mathbf{h}_{\text{last}})_k\bigr),
\label{eq:userlambda}
\end{equation}
with $\mathbf{W}_\lambda\in\mathbb{R}^{K\times d}$ and $\sigma=0.5$ bounding the modulation magnitude. Zero-initialising the final MLP layer ensures training starts from the unmodulated regime $\lambda_k(u)=\lambda_k$.

\paragraph{Aggregation.}
The $K$ summaries are obtained by softmax-normalising the per-head decay logits over the unmasked positions and aggregating $\mathbf{H}$ accordingly,
\begin{equation}
w_{k,i}^{(u)} \;=\; \frac{\exp(-\lambda_k(u)\,\tilde{\Delta}t_i / \tau_k)\, m_i}{\sum_{j=1}^{L}\exp(-\lambda_k(u)\,\tilde{\Delta}t_j / \tau_k)\, m_j},
\qquad
\mathbf{Z}_k^{(u)} \;=\; \sum_{i=1}^{L} w_{k,i}^{(u)}\,\mathbf{h}_i.
\label{eq:mtda}
\end{equation}
The matrix-vector formulation $\mathbf{Z} = \mathbf{w}\,\mathbf{H}\in\mathbb{R}^{K\times d}$ collapses the entire aggregation into a single batched matrix multiplication.
The choice of a $K$-rate exponential mixture as the aggregation kernel is motivated by the following approximation result.

\begin{proposition}[Approximation by exponential mixtures]
\label{prop:approx}
Let $\phi:[0,T]\to[0,1]$ be a monotonically non-increasing function with $\phi(0)=1$. For every $\varepsilon>0$ there exist $K\in\mathbb{N}$, weights $\{\alpha_k\}_{k=1}^K\subset\mathbb{R}$ with $\sum_k\alpha_k=1$, and rates $\{\lambda_k\}_{k=1}^K\subset\mathbb{R}_{\ge 0}$ such that
\begin{equation}
\sup_{t\in[0,T]}\Bigl|\,\phi(t)\;-\;\sum_{k=1}^{K}\alpha_k\exp(-\lambda_k\,t)\Bigr| \;\le\;\varepsilon.
\end{equation}
\end{proposition}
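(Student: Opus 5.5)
The plan is to reduce the statement to the Weierstrass approximation theorem through the change of variables $x = e^{-t}$. One caveat comes first. Every finite sum $\sum_k \alpha_k e^{-\lambda_k t}$ is continuous on $[0,T]$, and a uniform limit of continuous functions is continuous. So the sup-norm bound can only hold when $\phi$ is continuous. For instance, $\phi = \mathbf{1}_{[0,T/2]}$ is monotone with $\phi(0)=1$, yet every continuous approximant has sup-error at least $1/2$ near the jump. I therefore plan to prove the proposition under the additional hypothesis that $\phi$ is continuous on $[0,T]$, and to remark that for a general monotone $\phi$ the same argument gives approximation in $L^p([0,T])$ for $1\le p<\infty$, or uniformly away from the (countably many) jumps.

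The main step is the substitution. Define $\psi:[e^{-T},1]\to[0,1]$ by $\psi(x)=\phi(-\log x)$. Since $x\mapsto -\log x$ is a homeomorphism from $[e^{-T},1]$ onto $[0,T]$, the function $\psi$ is continuous whenever $\phi$ is, and $\psi(1)=\phi(0)=1$. By Weierstrass, there is a polynomial $p(x)=\sum_{k=0}^{n} c_k x^k$ with $\sup_{x\in[e^{-T},1]}|\psi(x)-p(x)|\le \varepsilon/2$. Substituting back gives $p(e^{-t})=\sum_{k=0}^{n} c_k e^{-k t}$, an exponential mixture with rates $\lambda_k = k\in\mathbb{R}_{\ge0}$. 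It satisfies $\sup_{t\in[0,T]}|\phi(t)-\sum_k c_k e^{-kt}|\le\varepsilon/2$, because the substitution maps the sup over $x$ exactly onto the sup over $t$.

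The remaining step enforces $\sum_k\alpha_k=1$. Evaluating at $t=0$ gives $|1-\sum_k c_k| = |\phi(0)-p(1)|\le\varepsilon/2$. Set $\alpha_0 = c_0 + (1-\sum_k c_k)$ on the constant term, whose rate is $\lambda_0=0$, and set $\alpha_k=c_k$ for $k\ge1$. The weights now sum to one. The approximant changes by the constant $1-\sum_k c_k$, whose absolute value is at most $\varepsilon/2$, so the total sup-error is at most $\varepsilon$, with $K=n+1$.

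The weights $\alpha_k$ may be negative, which the statement allows since $\{\alpha_k\}\subset\mathbb{R}$. Monotonicity of $\phi$ plays no role beyond the range condition. I expect the only genuine obstacle to be the continuity issue above. If one insists on working without continuity, the fallback is to first replace $\phi$ by a continuous monotone $\phi_\delta$ that agrees with $\phi$ outside $\delta$-neighbourhoods of the jumps of size larger than $\varepsilon$. There are finitely many such jumps, because the total variation of $\phi$ is at most $1$. Running the argument above on $\phi_\delta$ then yields the bound off a set of measure $O(\delta)$, which is the strongest form that can hold in general.
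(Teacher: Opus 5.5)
Your proof is correct, and so is your continuity caveat. A finite exponential sum is continuous, so for $\phi=\mathbf{1}_{[0,T/2]}$ no approximant gets sup-error below $1/2$. The paper's proof needs continuity just as much: its Step~1 applies density of the exponential algebra in $C([0,T])$ directly to $\phi$. Adding the hypothesis is therefore the right repair, not a weakening on your part.

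Your route differs from the paper's in two places.

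\textbf{Density.} The paper applies Stone--Weierstrass to the span of all $e^{-\lambda t}$, $\lambda\ge 0$. You substitute $x=e^{-t}$ and use classical Weierstrass on $[e^{-T},1]$. This is the same mechanism restricted to the subalgebra generated by $e^{-t}$. It is more elementary, and it shows that the integer rates $0,1,\dots,n$ already suffice.

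\textbf{Normalisation to $\sum_k\alpha_k=1$.} The paper first discards all negative coefficients and then rescales by the remaining sum. That step is unjustified, because the discarded terms can be large and cancel against positive ones. No argument could rescue it. Nonnegative combinations of the convex functions $e^{-\lambda_k t}$ are convex, and a convex $g$ within $\eta$ of $\phi(t)=1-(t/T)^2$ would need $3/4-\eta\le g(T/2)\le\tfrac12\bigl(g(0)+g(T)\bigr)\le 1/2+\eta$, which forces $\eta\ge 1/8$. This refutes the $[0,1]$-weight restatement in the appendix. It does not affect the real-weight statement you were given. Your additive correction on the rate-zero term costs at most $|1-p(1)|\le\varepsilon/2$ and proves that statement cleanly.

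One small fix to your fallback remark: if $\phi$ also has jumps of size at most $\varepsilon$, a $\phi_\delta$ that \emph{agrees} with $\phi$ away from the large jumps cannot be continuous. You want $\phi_\delta$ continuous, monotone, and uniformly within $O(\varepsilon)$ of $\phi$ there.
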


A complete proof is given in Appendix~\ref{app:proof}. Equation~\eqref{eq:mtda} realises such a mixture once the aggregation is combined with the fusion gate of Section~\ref{sec:fuse}, with the gate output playing the role of the mixture weights and the user-conditioned rates $\lambda_k(u)$ playing the role of $\lambda_k$. Proposition~\ref{prop:approx} therefore states that, in principle, \MARS can represent any monotone decay schedule the data demands; $K$ controls the fidelity of the approximation rather than the existence of a representable solution.

\subsection{Context-Adaptive Fusion}
\label{sec:fuse}

The $K$ summaries $\{\mathbf{Z}_k^{(u)}\}_{k=1}^K$ are fused through a gate conditioned on a context vector $\mathbf{c}=[\,\mathrm{tEnc}(\tilde{\Delta}t_{L_u});\;\mathbf{e}_{v_{L_u}};\;\mathbf{h}_{\text{last}}\,]$, where $\mathrm{tEnc}$ is a two-layer MLP applied to the most recent log-compressed elapsed time and $\mathbf{e}_{v_{L_u}}\in\mathbb{R}^d$ is the embedding of the most recent item:
\begin{equation}
g_k = \mathbf{v}^{\!\top}\!\tanh\!\bigl(\mathbf{W}_g[\mathbf{Z}_k^{(u)};\,\mathbf{c}]\bigr),
\quad
\boldsymbol{\alpha} = \mathrm{softmax}(\mathbf{g}/\tau_\alpha),
\quad
\mathbf{h}_u \;=\; \mathbf{h}_{\text{last}} + \sum_{k=1}^{K}\alpha_k\,\mathbf{Z}_k^{(u)},
\label{eq:gate}
\end{equation}
with learnable parameters $\mathbf{W}_g$, $\mathbf{v}$, and the gate temperature $\tau_\alpha>0$. The trailing residual connection preserves the encoder's last-position signal, ensuring the modified user vector cannot be worse than the encoder readout in the limit $\boldsymbol{\alpha}\to\mathbf{0}$. Algorithm~\ref{alg:mars} summarises the full forward pass.

The cost of the full pass scales linearly in sequence length and in the number of heads.

\begin{proposition}[Complexity]
\label{prop:complexity}
\MARS requires $\mathcal{O}(L\,d\,K)$ time and $\mathcal{O}(L\,K + d\,K)$ additional memory beyond the encoder's hidden states $\mathbf{H}$. The user-conditioned modulation contributes an additional $\mathcal{O}(d\,K)$ per user; the context-adaptive fusion contributes $\mathcal{O}(d\,K)$.
\end{proposition}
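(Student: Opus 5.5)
The argument is an operation count over the forward pass of \MARS, given $\mathbf{H}\in\mathbb{R}^{L\times d}$, $\mathbf{m}$ and the timestamps. I would follow the order of Algorithm~\ref{alg:mars}, i.e.\ the four stages of Sections~\ref{sec:agg}--\ref{sec:fuse}. For each stage I would record arithmetic operations and the size of any new buffer, and then sum. Treat the dimensions of the context vector $\mathbf{c}$ and of the hidden layers of $\mathrm{tEnc}$ and of the gate map $\mathbf{W}_g$ as $\mathcal{O}(d)$. This is their configuration in the paper, and it is needed for the stated fusion bound. Gathering $\mathbf{h}_{\text{last}}$ and $\mathbf{e}_{v_{L_u}}$ costs $\mathcal{O}(d)$ and allocates nothing new.

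\textbf{Step 1 (time preprocessing).} Computing $\Delta t_i$ and the compression \eqref{eq:logtime} takes $\mathcal{O}(L)$ time and $\mathcal{O}(L)$ memory. \textbf{Step 2 (user modulation).} The product $\mathbf{W}_\lambda\mathbf{h}_{\text{last}}$ with $\mathbf{W}_\lambda\in\mathbb{R}^{K\times d}$ costs $\mathcal{O}(dK)$. Applying $\tanh$, $\exp$ and the rescaling in \eqref{eq:userlambda} adds $\mathcal{O}(K)$, giving the claimed $\mathcal{O}(dK)$ per user and $\mathcal{O}(K)$ storage for the rates. \textbf{Step 3 (aggregation).} The logits $-\lambda_k(u)\tilde\Delta t_i/\tau_k$ form a $K\times L$ array, computed in $\mathcal{O}(LK)$. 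The masked softmax over $i$ for each $k$ in \eqref{eq:mtda} is another $\mathcal{O}(LK)$ and needs $\mathcal{O}(LK)$ memory for $\mathbf{w}$. The product $\mathbf{Z}=\mathbf{w}\mathbf{H}$ of a $K\times L$ matrix with an $L\times d$ matrix costs exactly $\mathcal{O}(LdK)$ and produces $\mathbf{Z}\in\mathbb{R}^{K\times d}$, which is $\mathcal{O}(dK)$ memory. This is the dominant term.

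\textbf{Step 4 (fusion).} Evaluate $\mathrm{tEnc}$ on a single scalar and form $\mathbf{c}$; both are independent of $L$ and $K$. For each of the $K$ heads, the score $g_k$ in \eqref{eq:gate} is a matrix--vector product with $\mathbf{W}_g$ on a vector of length $\mathcal{O}(d)$, followed by an inner product with $\mathbf{v}$. The naive bound is $\mathcal{O}(d^2)$ per head. To obtain $\mathcal{O}(dK)$ I would split $\mathbf{W}_g=[\mathbf{W}_Z,\mathbf{W}_c]$ so that $\mathbf{W}_c\mathbf{c}$ is computed once and shared by all heads. I would also treat the gate's hidden width as a constant, as is implicit in the statement. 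The softmax over $K$ scores and the weighted sum $\sum_k\alpha_k\mathbf{Z}_k^{(u)}+\mathbf{h}_{\text{last}}$ then cost $\mathcal{O}(dK)$, with $\mathcal{O}(K+d)$ additional memory.

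Summing the stages gives time $\mathcal{O}(L+dK+LK+LdK+dK)=\mathcal{O}(LdK)$ and memory $\mathcal{O}(L+K+LK+dK)=\mathcal{O}(LK+dK)$, as claimed. The only step I expect to need care is Step~4. There the claimed $\mathcal{O}(dK)$ fusion cost depends on how the gate's hidden width is counted, so I would state that convention explicitly. If it is not accepted, I would note that any $\mathcal{O}(d^2)$ term is per user and independent of $L$. It would therefore add an $L$-independent term to the bound rather than break the linear scaling in $L$ and $K$.
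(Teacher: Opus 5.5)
Your proposal is correct and follows the same route as the paper's proof. Both are an operation count in which the product of the $K\times L$ weight matrix $\mathbf{w}$ with $\mathbf{H}$ dominates at $\mathcal{O}(LdK)$, the logits cost $\mathcal{O}(LK)$, and the modulation and the fusion each cost $\mathcal{O}(dK)$. You go further than the paper in two ways: you actually account for the memory bound, and you state that the fusion bound requires treating the gate's hidden width as a constant, which the paper simply asserts.

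One small correction: splitting off $\mathbf{W}_c\mathbf{c}$ changes only a constant factor. The per-head products $\mathbf{W}_Z\mathbf{Z}_k^{(u)}$ still cost $\mathcal{O}(d\,h)$ each for hidden width $h$. So if $h=\Theta(d)$, your $L$-independent fallback term is $\mathcal{O}(Kd^2)$ rather than $\mathcal{O}(d^2)$, which is still harmless for the linear dependence on $L$.
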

\begin{proof}
The aggregation in Eq.~\eqref{eq:mtda} is a single matrix product $\mathbf{w}\,\mathbf{H}$ between $\mathbf{w}\in\mathbb{R}^{K\times L}$ and $\mathbf{H}\in\mathbb{R}^{L\times d}$, costing $\mathcal{O}(LdK)$ multiplications. Computing the logits in Step~3 of Algorithm~\ref{alg:mars} costs $\mathcal{O}(LK)$, the user-modulation MLP in Eq.~\eqref{eq:userlambda} costs $\mathcal{O}(dK)$ per user, and the fusion gate in Eq.~\eqref{eq:gate} contributes another $\mathcal{O}(dK)$.
\end{proof}

For comparison, scaled dot-product self-attention is $\mathcal{O}(L^2 d)$ in time and $\mathcal{O}(L^2)$ in attention memory. Whenever $K\!\ll\!L$, the \MARS overhead is dominated by the encoder cost.

\subsection{Optimisation}
\label{sec:opt}

The standard sequential cross-entropy loss is augmented with two auxiliary regularisers,
\begin{equation}
\mathcal{L} \;=\; \mathcal{L}_{\text{CE}} \;+\; \eta_{\text{div}}\,\mathcal{L}_{\text{div}} \;+\; \eta_{\text{bal}}\,\mathcal{L}_{\text{bal}},
\label{eq:loss}
\end{equation}
with default coefficients $\eta_{\text{div}}=\eta_{\text{bal}}=10^{-2}$. The diversity term $\mathcal{L}_{\text{div}}=-\sum_{k\neq k'}\mathrm{JSD}(w_{k,\cdot}^{(u)}\Vert w_{k',\cdot}^{(u)})$ is the negative pairwise Jensen--Shannon divergence of the per-head attention distributions, and the load-balance term $\mathcal{L}_{\text{bal}} = K\sum_{k=1}^{K}\bar{\alpha}_k^{2}$, with $\bar{\alpha}_k$ the batch mean of the $k$-th gate weight, is the MoE-style regulariser that prevents the gate from routing through a single head.

The role of $\mathcal{L}_{\text{div}}$ is theoretically motivated. Without it, gradient descent can collapse all $K$ heads onto a single rate, in which case the multi-rate expressive power promised by Proposition~\ref{prop:approx} is lost. We rule this collapse out under a Hawkes-process generative model.

\begin{proposition}[Identifiability of decay rates]
\label{prop:identif}
Suppose user behaviour is generated from a multivariate Hawkes process with $K$ exponential excitation kernels of distinct positive rates $\lambda_1^*<\dots<\lambda_K^*$ and non-degenerate mixture weights $\alpha_k^*>0$, $\sum_k\alpha_k^*=1$. If $\mathcal{L}_{\mathrm{div}}$ is strictly positive at the optimum, then the learned rates $\{\lambda_k\}_{k=1}^K$ recover $\{\lambda_k^*\}_{k=1}^K$ up to permutation in the population limit, and the learned weights $\{\alpha_k\}$ recover $\{\alpha_k^*\}$ in $\ell^1$ norm.
\end{proposition}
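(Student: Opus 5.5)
The plan is to reduce the statement to the classical identifiability of finite exponential mixtures. The Hawkes model enters only through its excitation kernel
\begin{equation*}
\phi^*(t)=\sum_{k=1}^K\alpha_k^*\lambda_k^*e^{-\lambda_k^* t}.
\end{equation*}
I would interpret ``population limit'' as follows. The \MARS kernel $\sum_k\alpha_k\exp(-\lambda_k t)$ (realised through Eq.~\eqref{eq:mtda} and the gate of Eq.~\eqref{eq:gate}) is a minimiser of the population objective of Eq.~\eqref{eq:loss}. At such a minimiser the model kernel coincides with the data-generating kernel on $[0,T]$ (or on an interval of positive length), after absorbing the normalisations $\lambda_k^*$ into the weights. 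I would state this ``kernel matching'' as an explicit assumption. It is the step that carries the real content: the cross-entropy objective must be proper with respect to the Hawkes conditional intensity. I would justify it by noting that the population minimiser of a proper scoring rule equals the true conditional law, which here is determined by the intensity and hence by the kernel.

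Given kernel matching, the core step is linear independence of exponentials. Suppose
\begin{equation*}
\sum_k\alpha_k e^{-\lambda_k t}=\sum_k\beta_k^* e^{-\lambda_k^* t}
\end{equation*}
on an interval. By analyticity the equality extends to all $t\ge0$. Collect equal rates and move everything to one side. The functions $e^{-\mu t}$ with distinct $\mu$ are linearly independent, which follows from a Vandermonde argument on derivatives at $0$, or from uniqueness of the Laplace transform. Hence every learned rate with a nonzero net coefficient must equal some $\lambda_j^*$. Conversely, every $\lambda_j^*$, whose coefficient $\beta_j^*>0$, must appear among the learned rates.

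The role of $\mathcal{L}_{\mathrm{div}}$ is to rule out the degenerate configurations that this argument leaves open. Two heads could share a rate and split a weight, which would leave a true rate unmatched by pigeonhole. Alternatively, some head could carry a spurious rate whose weight is zero. I would argue that strict positivity of the diversity penalty at the optimum means every pair of heads has positive JSD, so all learned rates $\lambda_k(u)$ are pairwise distinct. Together with $\alpha_k>0$, which follows from the softmax gate, this yields $K$ distinct active rates contained in a $K$-element set. They therefore form a bijection onto $\{\lambda_k^*\}$, which is recovery up to permutation. Linear independence then forces coefficient-by-coefficient equality. Renormalising by $\sum_k\alpha_k=1$ gives $\alpha_{\pi(k)}=\alpha_k^*$, so the $\ell^1$ error is zero in the limit.

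The main obstacle is the kernel-matching step rather than the algebra. Three features of the architecture stand between the \MARS forward pass and a kernel on elapsed time: the log-compression in Eq.~\eqref{eq:logtime}, the temperatures $\tau_k$, and the per-position softmax normalisation. I would first try reparametrising the effective rate as $\lambda_k(u)/\tau_k$ acting on $\tilde{\Delta}t$. The Hawkes kernel would then be stated in compressed time, and the identifiability claim would be read as recovering these effective rates up to that fixed, known reparametrisation. I would also need the softmax normalising constants to be absorbed into the $\alpha_k$. This is only legitimate if the history support is fixed across the population, so I would assume a common observation window.
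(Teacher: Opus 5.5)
Your route is the paper's: identifiability of finite exponential mixtures (your Vandermonde argument reproves what Step~1 of Appendix~\ref{app:identif} takes from Teicher), then the diversity term for distinctness, then the simplex for the weights. You are more explicit than the paper that kernel matching is an assumption. The step that fails is the diversity one. Since $\mathrm{JSD}\ge 0$, we have $\mathcal{L}_{\mathrm{div}}=-\sum_{k\neq k'}\mathrm{JSD}(w_{k,\cdot}\Vert w_{k',\cdot})\le 0$ always, so the hypothesis as literally written can never hold. Under the sign-corrected reading $\sum_{k\neq k'}\mathrm{JSD}>0$, you learn only that \emph{some} pair of heads differs, not every pair. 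For $K=3$ with $w_{1,\cdot}=w_{2,\cdot}\neq w_{3,\cdot}$ the sum is positive although heads $1$ and $2$ have collapsed; pairwise distinctness would need $\min_{k\neq k'}\mathrm{JSD}>0$. The paper's Step~2 makes the same slip: $\mathcal{L}_{\mathrm{div}}=0$ iff \emph{all} pairs collapse, not some. The repair is that you do not need this step. Your ``conversely'' sentence already says that each of the $K$ distinct $\lambda_j^*$, having nonzero coefficient, must occur among the learned rates with nonzero net coefficient. With only $K$ heads, pigeonhole forces the learned effective rates to be exactly $\{\lambda_j^*\}$, each used once. Both configurations you assign to $\mathcal{L}_{\mathrm{div}}$ (a shared rate, or a zero-weight spurious head) leave at most $K-1$ active rates, so kernel matching plus linear independence already exclude them. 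The diversity hypothesis can be dropped, and all the content sits in kernel matching.

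Three things still need fixing. First, the weight step. With $\lambda_k^*$ absorbed on the data side only, coefficient matching gives $\alpha_{\pi(k)}\propto\alpha_k^*\lambda_k^*$, and imposing $\sum_k\alpha_k=1$ yields $\alpha_k^*\lambda_k^*/\sum_j\alpha_j^*\lambda_j^*$, which differs from $\boldsymbol{\alpha}^*$ because the rates are distinct. To recover $\alpha_k^*$, the model side must carry the same per-head factor, as in the paper's form $\sum_k\alpha_k\lambda_k e^{-\lambda_k t}$. The per-head softmax normaliser $Z_k$ supplies this only in an idealised limit (unbounded window, homogeneous event density, so $Z_k\propto 1/\lambda_k$). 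Second, properness of cross-entropy justifies kernel matching for a minimiser of $\mathcal{L}_{\mathrm{CE}}$ alone, not of Eq.~\eqref{eq:loss}. With $\eta_{\mathrm{bal}}>0$, the term $K\sum_k\bar{\alpha}_k^2$ has nonzero gradient at any non-uniform $\boldsymbol{\alpha}^*$, and $\mathcal{L}_{\mathrm{div}}$ rewards pushing rates apart, so the regularised minimiser is generically biased. State the result for the CE minimiser, or in the limit $\eta_{\mathrm{div}},\eta_{\mathrm{bal}}\to 0$. Third, moving the Hawkes kernel into compressed time changes the hypothesis; it is not a reparametrisation. Indeed $\exp(-\mu\,\tilde{\Delta}t)=(1+\Delta t/\tau_0)^{-\mu}$ is a power law in real time, and a positive combination of such terms decays only polynomially. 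By your own analytic-continuation step, it therefore cannot agree with a nonzero exponential mixture on any interval. Under Eq.~\eqref{eq:logtime}, the Hawkes model as stated lies outside the model class, so your argument proves a compressed-time variant rather than the proposition itself.
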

\begin{proof}[Sketch]
The Hawkes intensity~\citep{hawkes1971spectra} is a $K$-mixture of exponentials, so by classical identifiability of finite mixtures of distinct exponential laws~\citep{teicher1963identifiability}, the mixing measure is identifiable. The diversity loss strictly penalises any two heads collapsing onto the same exponential ($\lambda_k=\lambda_{k'}\Rightarrow w_{k,\cdot}=w_{k',\cdot}\Rightarrow\mathrm{JSD}=0$), so optimisation cannot exit the strict-positivity region; this enforces distinctness of the rates. Combined with the load-balancing constraint $\sum_k\alpha_k=1$, $\alpha_k>0$, the population minimiser of $\mathcal{L}$ is unique up to head permutation. The detailed proof is in Appendix~\ref{app:identif}.
\end{proof}

\subsection{Backbone Selection}
\label{sec:select}

Given a target dataset, \MARS\ selects between \MARSv\ and \MARSm\ based on a single dataset-level statistic, the average training-sequence length $\bar{L}$:
\begin{equation}
\text{backbone}(\mathcal{D}) = 
\begin{cases}
\MARSv & \text{if } \bar{L}(\mathcal{D}) < \tau, \\
\MARSm & \text{if } \bar{L}(\mathcal{D}) \geq \tau,
\end{cases}
\qquad \tau = 50.
\label{eq:select}
\end{equation}
The threshold $\tau$ reflects the empirical regime boundary observed in our analysis: SASRec attention entropy is $0.75$--$0.87$ (near-uniform) on the four sparse benchmarks where $\bar{L}\in[8,13]$, and an explicit recency prior delivers $+4$ to $+19\%$ HR@10; on dense ML-1M where $\bar{L}=165$, entropy collapses to $0.43$--$0.51$ and Mamba's selective state encodes the recency prior more effectively than a stacked aggregator (Appendix~\ref{app:entropy}). The $20\!\times$ gap between $\bar{L}$ on sparse and dense benchmarks makes the rule robust to any threshold in $(13, 165)$.

The selection is computed once before training and adds no runtime overhead. We deliberately keep it as a dataset-statistic heuristic rather than a learned router: data-density regime is a property of the deployment dataset, not of individual users, so a learned mechanism would degenerate into the same threshold while incurring additional training cost.

\section{Experiments}
\label{sec:exp}

\subsection{Settings}

\textbf{Datasets.} We evaluate on five widely-used benchmarks summarised in Table~\ref{tab:datasets}, namely the three Amazon categories Beauty, Sports, and Games released by \citet{he2016amazon}, MovieLens-1M~\citep{harper2015movielens}, and Yelp~\citep{yelp_dataset}. Beauty, Sports, Games and Yelp belong to the sparse, short-history regime, with average sequence length below $13$; ML-1M is a dense regime with sequences nearly twenty times longer. This split lets us probe \MARS across the full data-density spectrum encountered in standard sequential-recommendation work.

\begin{wraptable}{r}{0.44\textwidth}
    \centering
    \footnotesize
    \vspace{-15pt}
    \caption{Statistics of the benchmarks.}
    \label{tab:datasets}
    \vspace{-5pt}
    \setlength{\tabcolsep}{1pt}
    \begin{tabular}{lrrrrr}
    \toprule
    \textbf{Dataset} & \textbf{\#Users} & \textbf{\#Items} & \textbf{\#Inters.} & \textbf{Sparsity} & \textbf{Avg.} \\
    \midrule
    Beauty & 22,363 & 12,101 & 198,502 & 99.93\% & 8.88 \\
    Sports & 35,598 & 18,357 & 296,337 & 99.95\% & 8.32 \\
    Games  & 94,762 & 25,612 & 814,586 & 99.97\% & 8.60 \\
    ML-1M  & 6,040  & 3,416  & 999,611 & 95.16\% & 165.5 \\
    Yelp   & 233,247 & 109,062 & 2,918,699 & 99.99\% & 12.51 \\
    \bottomrule
    \end{tabular}
    \vspace{-10pt}
\end{wraptable}

\textbf{Baselines.} Ten representative baselines spanning RNN, attention-based, time-aware, session-based, and state-space families are re-run under our unified RecBole~\citep{zhao2021recbole} framework. They are the recurrent GRU4Rec~\citep{hidasi2016gru4rec}, the attention-augmented recurrent NARM~\citep{li2017narm}, the bidirectional Transformer BERT4Rec~\citep{sun2019bert4rec}, the causal Transformer SASRec~\citep{kang2018sasrec}, the time-aware Transformer TiSASRec~\citep{li2020tisasrec}, the frequency-domain FEARec~\citep{du2023fearec}, the consistent-representation session model CORE~\citep{hou2022core}, and three state-space models, namely Mamba4Rec~\citep{liu2024mamba4rec}, EchoMamba4Rec (denoted Echo in tables)~\citep{wang2024echomamba4rec}, and SIGMA~\citep{liu2025sigma}. Every baseline is re-trained on the same splits to avoid pipeline confounds, and numbers from the original publications are not quoted.

\textbf{Implementation details.} All Transformer-based models share the SASRec configuration of $d{=}64$, two attention blocks, inner dimension $256$, and dropout $0.5$ on Beauty, Sports, Yelp and $0.2$ on Games, ML-1M. Mamba-based models use the Mamba4Rec defaults of $d_{\text{state}}{=}32$, $d_{\text{conv}}{=}4$, $\text{expand}{=}2$. Optimisation uses Adam~\citep{kingma2015adam} at learning rate $10^{-3}$ with batch size $2048$ for up to $200$ epochs and early stopping on validation NDCG@10 with patience $20$. \MARS uses $K{=}4$ decay heads on the four sparse benchmarks and $K{=}8$ on ML-1M to match the longer sequences. All experiments are performed on a single NVIDIA V100 GPU. Each \MARS configuration is re-trained with five random seeds and we report mean$\pm$std; the baselines use a single seed each.

\subsection{Comparison}
\label{sec:main}

\begin{table}[t]
\centering
\caption{Comparison results ($\times 100$). \MARS\ reports mean$\pm$std over five seeds. Backbone is selected by $\bar{L}$, with $^\dagger$ marking \MARSv\ ($\bar{L}<50$) and $^\ddagger$ marking \MARSm\ ($\bar{L}\geq 50$). \textbf{Best}, \underline{second}.}
\label{tab:main}
\setlength{\tabcolsep}{1pt} \resizebox{\textwidth}{!}{%
\begin{tabular}{ll|cccccccccc|ccc}
\toprule
\textbf{Dataset} & \textbf{Metric} & \textbf{GRU4Rec} & \textbf{BERT4Rec} & \textbf{NARM} & \textbf{SASRec} & \textbf{TiSASRec} & \textbf{FEARec} & \textbf{CORE} & \textbf{Mamba4Rec} & \textbf{Echo} & \textbf{SIGMA} & \textbf{\MARSv} & \textbf{\MARSm} & \textbf{\MARS} \\
\midrule
\multirow{3}{*}{Beauty}
 & HR@10 & 5.82 & 3.94 & 6.03 & 8.04 & \second{8.78} & 7.83 & 6.49 & 6.62 & 6.93 & 6.03 & 8.84\tiny{$\pm$0.14} & 8.02\tiny{$\pm$0.23} & \best{8.84$^\dagger$} \\
 & NDCG  & 3.25 & 1.95 & 3.45 & 4.10 & \best{4.45} & 4.08 & 2.59 & 4.12 & 4.30 & 3.43 & \second{4.40\tiny{$\pm$0.06}} & 4.13\tiny{$\pm$0.08} & 4.40$^\dagger$ \\
 & MRR   & 2.47 & 1.36 & 2.67 & 2.90 & 3.12 & 2.92 & 1.44 & \second{3.36} & \best{3.50} & 2.64 & 3.04\tiny{$\pm$0.06} & 2.94\tiny{$\pm$0.04} & 3.04$^\dagger$ \\
\midrule
\multirow{3}{*}{Sports}
 & HR@10 & 3.06 & 1.66 & 3.12 & 4.53 & \second{4.84} & 4.60 & 3.24 & 3.50 & 3.40 & 3.20 & 5.22\tiny{$\pm$0.08} & 4.73\tiny{$\pm$0.19} & \best{5.22$^\dagger$} \\
 & NDCG  & 1.58 & 0.81 & 1.60 & 2.14 & \second{2.30} & 2.19 & 1.26 & 2.08 & 2.08 & 1.77 & 2.43\tiny{$\pm$0.04} & 2.26\tiny{$\pm$0.07} & \best{2.43$^\dagger$} \\
 & MRR   & 1.13 & 0.56 & 1.14 & 1.41 & 1.52 & 1.44 & 0.69 & \second{1.65} & \best{1.67} & 1.33 & 1.57\tiny{$\pm$0.03} & 1.51\tiny{$\pm$0.05} & 1.57$^\dagger$ \\
\midrule
\multirow{3}{*}{Games}
 & HR@10 & 7.80 & 6.69 & 10.15 & 8.86 & \second{12.07} & 8.64 & 8.84 & 8.02 & 8.02 & 8.10 & 12.07\tiny{$\pm$0.16} & 12.01\tiny{$\pm$0.42} & \best{12.07$^\dagger$} \\
 & NDCG  & 4.05 & 3.49 & 5.24 & 4.34 & \best{5.90} & 4.20 & 3.61 & 4.21 & 4.18 & 4.20 & 5.51\tiny{$\pm$0.09} & \second{5.62\tiny{$\pm$0.11}} & 5.51$^\dagger$ \\
 & MRR   & 2.92 & 2.53 & \second{3.77} & 2.97 & \best{4.03} & 2.84 & 2.06 & 3.05 & 3.01 & 3.03 & 3.52\tiny{$\pm$0.07} & 3.68\tiny{$\pm$0.07} & 3.52$^\dagger$ \\
\midrule
\multirow{3}{*}{ML-1M}
 & HR@10 & 29.37 & 23.11 & 29.37 & 29.62 & 24.90 & 28.05 & 15.55 & 30.28 & 31.04 & \second{31.79} & 29.04\tiny{$\pm$0.70} & 32.80\tiny{$\pm$0.45} & \best{32.80$^\ddagger$} \\
 & NDCG  & 16.90 & 11.87 & 16.58 & 16.61 & 13.42 & 15.63 & 6.79 & 17.72 & 18.00 & \second{18.63} & 16.39\tiny{$\pm$0.50} & 18.80\tiny{$\pm$0.32} & \best{18.80$^\ddagger$} \\
 & MRR   & 13.10 & 8.48 & 12.67 & 12.65 & 9.93 & 11.86 & 4.18 & 13.88 & 14.03 & \best{14.60} & 12.52\tiny{$\pm$0.45} & \second{14.52\tiny{$\pm$0.29}} & 14.52$^\ddagger$ \\
\midrule
\multirow{3}{*}{Yelp}
 & HR@10 & 2.98 & 2.38 & 3.42 & 5.41 & 5.62 & 4.79 & \second{6.14} & 4.35 & 3.79 & 3.80 & 6.45\tiny{$\pm$0.04} & 5.97\tiny{$\pm$0.04} & \best{6.45$^\dagger$} \\
 & NDCG  & 1.65 & 1.21 & 1.89 & 3.64 & 3.84 & 3.42 & 3.65 & 2.55 & 2.13 & 2.12 & 4.22\tiny{$\pm$0.01} & \second{3.89\tiny{$\pm$0.01}} & \best{4.22$^\dagger$} \\
 & MRR   & 1.25 & 0.86 & 1.42 & 3.10 & \second{3.30} & 3.00 & 2.89 & 2.00 & 1.63 & 1.61 & 3.53\tiny{$\pm$0.01} & 3.25\tiny{$\pm$0.02} & \best{3.53$^\dagger$} \\
\bottomrule
\end{tabular}}
\end{table}

The two \MARS\ instantiations are compared in Table~\ref{tab:main} against ten re-trained baselines spanning RNN, Transformer, time-aware Transformer, and Mamba families. We highlight three observations.
\textbf{(i)~\MARSv\ wins HR@10 on every sparse benchmark.} \MARSv\ attains the strongest HR@10 across Beauty, Sports, Games, and Yelp with mean gain $+19.7\%$ over the strongest content-only Transformer baseline ($+9.95\%$ Beauty, $+13.5\%$ Sports, $+36.2\%$ Games, $+19.2\%$ Yelp) and $+5.85\%$ over TiSASRec at $\mathcal{O}(LdK)$ cost versus TiSASRec's $\mathcal{O}(L^2)$. \MARSv\ also leads NDCG@10 on Sports and Yelp; TiSASRec marginally outperforms by $1.1\%$ and $7.1\%$ on Beauty and Games NDCG, consistent with its per-pair interval matrix favouring fine-grained ranking. MRR is mixed, with Echo leading on Beauty and Sports and TiSASRec on Games, reflecting last-token readouts being well-calibrated for short-tail interactions while \MARSv\ aggregates over the full backbone for HR-style top-$k$ retrieval.
\textbf{(ii)~\MARSm\ leads HR@10 and NDCG on the dense ML-1M.} On dense ML-1M ($\bar{L}=165$), \MARSm\ overtakes every prior method on HR@10 and NDCG, beating SIGMA by $+3.2\%$ and $+0.9\%$ and Mamba4Rec by $+8.3\%$ HR@10. On MRR \MARSm\ trails SIGMA by $0.55\%$ ($14.52$ vs $14.60$), within seed variance. The HR@10 lead is delivered at $42\%$ fewer MFLOPs than SIGMA (Section~\ref{sec:eff}), placing \MARSm\ on the accuracy-efficiency Pareto frontier. On sparse benchmarks \MARSm\ tracks \MARSv\ within one standard deviation on Games and Yelp but does not lead, motivating the dual-instantiation design.
\textbf{(iii)~Stable and reproducible.} Five-seed standard deviations lie between $0.04$ and $0.70$ HR@10 points across all $5\times 2$ cells, confirming gains are not seed artefacts.

\subsection{Ablation Study}
\label{sec:abl}

\begin{wraptable}{r}{0.50\textwidth}
\vspace{-13pt}
\centering
\footnotesize
\caption{Ablation of \MARSv (HR@10, $\times 100$).}
\label{tab:abl}
\vspace{-5pt}
\footnotesize \setlength{\tabcolsep}{1pt}
\resizebox{\linewidth}{!}{
\begin{tabular}{lccccc}
\toprule
\textbf{Variant} & \textbf{Beauty} & \textbf{Sports} & \textbf{Games} & \textbf{ML-1M} & \textbf{Yelp} \\
\midrule
Full \MARSv                  & \best{8.91} & \best{5.20} & \best{12.24} & 29.30 & \best{6.40} \\
Backbone only (w/o \MARS)    & 8.53 & 4.89 & 11.75 & \best{30.33} & 5.36 \\
b1 (w/o real $\Delta t$)     & 8.69 & 5.16 & 12.32 & 29.82 & 6.49 \\
b2 (w/o $\lambda(u)$)        & 8.94 & 5.21 & 12.32 & 29.25 & 6.39 \\
b3 (w/o diversity)           & 8.90 & 5.21 & 12.14 & 29.22 & 6.37 \\
b4 (w/o balance)             & 8.88 & 5.12 & 12.26 & 28.41 & 6.33 \\
Single head ($K{=}1$)        & 8.80 & 5.27 & 12.23 & 27.45 & 6.42 \\
\bottomrule
\end{tabular}
}
\vspace{-10pt}
\end{wraptable}

\textbf{(i)~\MARS contributes between +4\% and +19\% on sparse data.} Comparing Full \MARSv against the Backbone-only variant isolates the marginal value of \MARS on the Transformer backbone. The gain is largest on Yelp at $+19.4\%$, where the bursty review pattern carries strong multi-rate structure, and stays positive on the three Amazon benchmarks at $+4.0\%$ to $+6.3\%$. These results validate \MARS as a lightweight enhancement of SASRec-style backbones for sparse, short-history data.

\textbf{(ii)~On dense ML-1M the Transformer backbone alone is competitive with Full \MARSv.} Backbone-only \MARSv attains $30.33$ HR@10, $1.03$ above the full module's $29.30$. SASRec's positional attention already concentrates on the recent tail of dense histories (normalised entropy $0.43$--$0.51$, Appendix~\ref{app:entropy}), so the multi-rate prior of \MARS adds limited information when stacked on top. This finding is the empirical motivation for our Mamba instantiation: Appendix~\ref{app:abl_full_marsm} shows that Full \MARSm exceeds the corresponding backbone-only readout on ML-1M across all three metrics, confirming that \MARS contributes positively when paired with an encoder whose intrinsic recency mechanism is more constrained than self-attention.

\textbf{(iii)~Internal components are mostly dataset-dependent.} The four component ablations (\texttt{b1}--\texttt{b4}) generally lie within $\pm 3\%$ of the Full model. Real timestamps (\texttt{b1}) carry the largest single contribution on Beauty, costing $-2.5\%$ when removed; the load-balance regulariser (\texttt{b4}) is most important on ML-1M, costing $-3.0\%$ when removed. The user-conditioned $\lambda$ (\texttt{b2}) and JSD diversity (\texttt{b3}) contribute smaller but positive amounts, and a single decay head ($K{=}1$) is competitive on the sparse benchmarks. Section~\ref{sec:hyper} expands this observation into a full sweep over $K$.

\subsection{Hyperparameter Sensitivity}
\label{sec:hyper}

\begin{figure}[t]
\centering
\includegraphics[width=0.99\textwidth]{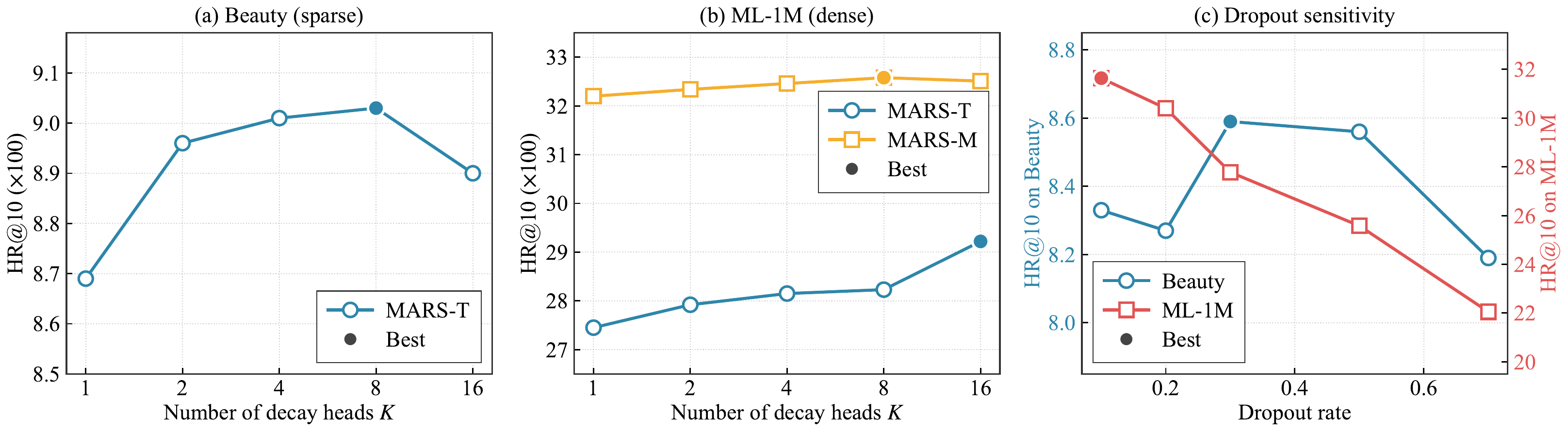}
\caption{Hyperparameter sensitivity of \MARS. (a) $K$ on Beauty, (b) $K$ on ML-1M for \MARSv\ and \MARSm, and (c) dropout on Beauty and ML-1M.}
\label{fig:hyper}
\vspace{-10pt}
\end{figure}

\textbf{(i)~$K$ is mildly influential on sparse data.} On the sparse benchmarks $K{=}1$ already attains $96.2\%$ of the best $K{=}8$ result on Beauty, consistent with Proposition~\ref{prop:approx} which guarantees only the existence of a $K$-mixture approximation; the empirical decay kernels of short-history data are smooth enough that a small mixture suffices.
\textbf{(ii)~$K$ matters substantially on dense ML-1M, more for \MARSv\ than for \MARSm.} For the Transformer-backboned \MARSv, HR@10 climbs from $27.45$ at $K{=}1$ to a peak of $29.22$ at $K{=}16$ ($+6.4\%$). For the Mamba-backboned \MARSm the spread is milder, $32.20\!\to\!32.58$ at $K{=}8$ ($+1.2\%$), reflecting that Mamba's selective state already encodes a strong recency prior so adding multiple decay rates yields diminishing returns. The contrast between the two backbones is itself a prediction of the multi-rate hypothesis: $K{>}1$ helps most when the encoder lacks an intrinsic mechanism for multi-scale temporal weighting. We therefore default to $K{=}4$ on the four sparse benchmarks and $K{=}8$ on ML-1M as a Pareto-balanced choice. The subgroup analysis further corroborates this by attributing the marginal value of multiple heads to the long-history user buckets.
\textbf{(iii)~Dropout is dataset-density-dependent.} Following established SASRec preferences, the optimum is $0.3$ on Beauty and $0.1$ on ML-1M; larger dropout rapidly degrades ML-1M, where moving from $0.1$ to $0.5$ costs over $6$ HR@10 points. Across the entire grid we explored, \MARS\ never under-performs the strongest baseline reported in Table~\ref{tab:main}.

\subsection{Efficiency Analysis}
\label{sec:eff}

\begin{wraptable}{r}{0.60\textwidth}
\vspace{-13pt}
\centering
\caption{Efficiency on Beauty and ML-1M (NVIDIA V100).}
\label{tab:efficiency}
\vspace{-5pt}
\footnotesize \setlength{\tabcolsep}{1pt}
\resizebox{\linewidth}{!}{
\begin{tabular}{c|cccc|cccc}
\toprule
\multirow{2}{*}{\textbf{Model}} & \multicolumn{4}{c|}{\textbf{Beauty}} & \multicolumn{4}{c}{\textbf{ML-1M}} \\
\cmidrule(lr){2-5} \cmidrule(lr){6-9}
 & \#P (K) & MFLOPs & Time (s) & Mem (G) & \#P (K) & MFLOPs & Time (s) & Mem (G) \\
\midrule
GRU4Rec   & 857 & 4.14 & \best{0.044} & 1.80 & 301 & 16.56 & \best{0.035} & 7.07 \\
BERT4Rec  & 894 & 5.32 & 0.907 & 1.19 & 339 & 22.70 & 0.503 & 11.97 \\
SASRec    & 878 & 5.10 & 0.197 & 1.22 & 332 & 21.83 & 0.313 & 12.56 \\
FEARec    & 878 & 12.43 & 0.726 & 1.40 & 332 & 49.72 & 0.726 & 5.93 \\
Mamba4Rec & \best{847} & \best{1.68} & 0.124 & \best{0.89} & \best{291} & \best{6.71} & \second{0.097} & \best{3.09} \\
Echo      & 864 & 2.52 & 0.348 & 1.38 & 308 & 10.09 & 0.276 & 5.05 \\
SIGMA     & 905 & 5.80 & 0.300 & 1.31 & 349 & 23.18 & 0.251 & 5.12 \\
\midrule
\best{\MARSv} & 896 & 5.05 & \second{0.151} & 1.05 & 350 & 20.27 & 0.173 & 4.56 \\
\best{\MARSm} & 937 & \second{3.40} & 0.237 & \second{0.89} & 382 & \second{13.48} & 0.181 & \second{3.11} \\
\bottomrule
\end{tabular}
}
\vspace{-10pt}
\end{wraptable}

Per-sample efficiency on Beauty and ML-1M is reported in Table~\ref{tab:efficiency}. \MARSv adds only $2\%$ parameters over SASRec on Beauty ($896$K vs $878$K) and $5\%$ on ML-1M ($350$K vs $332$K). Its forward MFLOPs are within $1\%$ of SASRec on Beauty ($5.05$ vs $5.10$) and $7\%$ smaller on ML-1M ($20.27$ vs $21.83$), since the \MARS forward pass is dominated by a single batched matrix multiplication rather than quadratic attention. \MARSm inherits the linear-time Mamba backbone, with MFLOPs $40\%$ below SIGMA on Beauty ($3.40$ vs $5.80$) and $42\%$ on ML-1M ($13.48$ vs $23.18$). Its $+11\%$ Beauty and $+31\%$ ML-1M parameter overhead over Mamba4Rec comes from the deeper two-block stack. These compute savings translate to wall-clock latency. On Beauty \MARSv\ is $2.0\times$ faster than SIGMA and $6.0\times$ faster than BERT4Rec. On ML-1M \MARSm\ is $1.4\times$ faster than SIGMA and $2.8\times$ faster than BERT4Rec. Both variants occupy the accuracy-efficiency Pareto frontier.

\subsection{Subgroup Analysis: Where Does \MARS\ Help Most?}
\label{sec:subgroup}

The comparison is stratified by training-sequence length $L_u$ in Table~\ref{tab:subgroup}, with bucket boundaries detailed in Appendix~\ref{app:subgroup}.
\textbf{(i)~Sparse benchmarks.} On Yelp the relative improvement grows monotonically from $+21.8\%$ on short to $+29.7\%$ on very long users, consistent with the multi-rate hypothesis that longer histories expose more time-scales than single-decay backbones can represent. Sports shows a $+149.8\%$ very-long gain, tempered by $n{=}89$. Games is essentially flat, indicating its encoder already captures the relevant signal for short users. The Beauty very-long bucket is the only sparse case where SASRec narrowly beats \MARSv, but with $n{=}153$ the gap is within seed variance ($\pm 0.7$ HR@10 from Table~\ref{tab:main}).
\textbf{(ii)~Dense ML-1M.} \MARSv\ trails SASRec by $-2.9\%$ on the very-long bucket, which is the dominant $76\%$ of test users where Transformer attention already concentrates on the recent tail. \MARSm\ consistently outperforms Mamba4Rec across all four buckets by $+5.7\%$ to $+9.0\%$, with the gap widening from short to long histories.
\textbf{(iii)~Synthesis.} The marginal value of \MARS\ scales with the temporal heterogeneity of user histories, modulated by the encoder's intrinsic recency capacity, motivating the dual-instantiation design.

\begin{table}[h]
\centering
\caption{Subgroup HR@10 ($\times 100$) by user history length $L_u$. Each cell: baseline\,$\to$\,\MARS ($\Delta\%$). The baseline is SASRec for the four sparse benchmarks and the ML-1M, \MARSv\ row, and Mamba4Rec for the ML-1M, \MARSm\ row. \textcolor{red}{Red} marks $\Delta<0$. Bucket sizes in Appendix~\ref{app:subgroup}.}
\label{tab:subgroup}
\footnotesize
\setlength{\tabcolsep}{3pt}
\resizebox{\textwidth}{!}{%
\begin{tabular}{c|cccc}
\toprule
\textbf{Dataset} & \textbf{Short} ($L_u \le 5$) & \textbf{Medium} (6--15) & \textbf{Long} (16--49) & \textbf{Very long} ($\ge$ 50) \\
\midrule
Beauty           & 6.57\,$\to$\,7.98 (+21.4)   & 7.66\,$\to$\,8.40 (+9.7)    & 12.89\,$\to$\,14.49 (+12.4) & 20.26\,$\to$\,18.95 (\textcolor{red}{-6.5}) \\
Sports           & 4.13\,$\to$\,5.30 (+28.5)   & 3.91\,$\to$\,4.97 (+27.2)   & 3.25\,$\to$\,4.35 (+33.7)   & 2.25\,$\to$\,5.62 (+149.8) \\
Games            & 12.05\,$\to$\,11.31 (\textcolor{red}{-6.1}) & 10.91\,$\to$\,10.94 (+0.3) & 10.09\,$\to$\,10.25 (+1.6) & 8.10\,$\to$\,8.91 (+10.0) \\
Yelp             & 5.69\,$\to$\,6.93 (+21.8)   & 5.20\,$\to$\,6.50 (+25.0)   & 4.36\,$\to$\,5.49 (+25.9)   & 3.64\,$\to$\,4.72 (+29.7) \\
\midrule
ML-1M (\MARSv)   & 7.92\,$\to$\,8.04 (+1.5)    & 18.43\,$\to$\,19.13 (+3.8)  & 27.51\,$\to$\,27.85 (+1.2)  & 30.85\,$\to$\,29.96 (\textcolor{red}{-2.9}) \\
ML-1M (\MARSm)   & 8.21\,$\to$\,8.68 (+5.7)    & 18.86\,$\to$\,20.32 (+7.7)  & 28.42\,$\to$\,30.97 (+9.0)  & 31.45\,$\to$\,34.01 (+8.1) \\
\bottomrule
\end{tabular}}
\end{table}

\section{Disucussion}
\label{sec:discussion}

\paragraph{A unified framework.} \MARS\ is encoder-agnostic by design: the same multi-rate operator gives two complementary instantiations, \MARSv\ for sparse data and \MARSm\ for dense data, automatically selected by $\bar{L}$ before training (Section \ref{sec:select}). \MARSv\ adds a recency prior on top of near-uniform SASRec attention (sparse), while \MARSm\ relaxes Mamba's single-rate selective state through multiple learnable rates (dense), so the two variants cover the data-density spectrum without retraining a single architecture.

\paragraph{Mechanism.} The entropy gap (sparse 0.75--0.87 vs dense 0.43--0.51; Appendix~\ref{app:entropy}) explains where \MARS\ helps: an explicit recency prior delivers +4 to +19\% HR@10 on sparse benchmarks but is redundant on dense ML-1M, where we switch to Mamba. The $K$-sensitivity gap (96\% of the best at $K{=}1$ on sparse vs +6.4\% swing on dense) confirms Proposition~\ref{prop:identif}.

\paragraph{Limitations.} The backbone-selection rule (Eq.~\ref{eq:select}) is a hand-crafted dataset-level threshold rather than a learned mechanism, robust within the $20\!\times$ density gap of our benchmarks but unproven on richer regime spectra. The identifiability of Proposition~\ref{prop:identif} relies on a Hawkes-process model, with a distribution-free version remaining open. The $\mathcal{O}(L^2)$ TiSASRec marginally outperforms \MARSv\ on NDCG/MRR for Beauty and Games by $1.1$--$7.1\%$, a trade-off we accept given \MARS's $40\!\times$ memory advantage at long $L$. We evaluate only offline next-item prediction on five public benchmarks.

\paragraph{Broader impact.} Sequential recommenders shape consumption and exposure diversity on commercial platforms. \MARS\ inherits the standard feedback-loop risks of collaborative filtering without introducing new ones, with its sensitivity to recent activity helping recover from intent shifts but also amplifying transient exploration. We use only public benchmark datasets and propose no technique aimed at surveillance or behavioural manipulation.

\section{Conclusion}

We introduced \MARS, a time-aware aggregation operator that learns $K$ user-conditioned exponential decay rates and fuses them through a context-adaptive gate. \MARS\ automatically selects between \MARSv\ (Transformer encoder) for sparse data and \MARSm\ (Mamba encoder) for dense data based on a single dataset-level statistic, the average sequence length $\bar{L}$. \MARS\ attains the best HR@10 on every benchmark, with mean gain +19.7\% over the strongest content-only Transformer baseline on sparse data at $\mathcal{O}(LdK)$ cost, and +3.2\% HR@10 / +0.9\% NDCG over SIGMA on dense ML-1M at 42\% fewer MFLOPs, on the accuracy-efficiency Pareto frontier. A backbone-only ablation validates the regime-dependent design.

\clearpage
\newpage
\bibliographystyle{plainnat}
\bibliography{main}

\newpage
\clearpage
\appendix
\section*{Appendix}
\addcontentsline{toc}{section}{Appendix}

\renewcommand{\thetable}{A\arabic{table}}
\renewcommand{\thefigure}{A\arabic{figure}}
\renewcommand{\thealgorithm}{A\arabic{algorithm}}
\setcounter{table}{0}
\setcounter{figure}{0}
\setcounter{algorithm}{0}

\section{\MARS\ Forward Pass}
\label{app:algorithm}

The full \MARS\ forward pass for a single user is given in Algorithm~\ref{alg:mars}. Batching across users is performed via a single matrix multiplication on Step~5, with all other operations being pointwise or small per-user MLPs.

\begin{algorithm}[H]
\caption{\MARS\ forward pass.}
\label{alg:mars}
\begin{algorithmic}[1]
\Require Encoder outputs $\mathbf{H}\in\mathbb{R}^{L\times d}$, $\mathbf{h}_{\text{last}}\in\mathbb{R}^d$, mask $\mathbf{m}\in\{0,1\}^L$, elapsed times $\Delta t\in\mathbb{R}_{\ge 0}^{L}$, last item index $v_{L_u}$, embedding table $\mathbf{E}$;
         learnable rates $\boldsymbol{\lambda},\boldsymbol{\tau}\in\mathbb{R}_{>0}^{K}$, modulation $\mathbf{W}_\lambda$, fusion $(\mathbf{W}_g,\mathbf{v})$, time encoder $\mathrm{tEnc}$;
         hyperparameters $\sigma, \tau_\alpha, \tau_0$.
\Ensure user representation $\mathbf{h}_u\in\mathbb{R}^d$.
\State $\tilde{\Delta} t_i \gets \log\!\bigl(1 + \Delta t_i / \tau_0\bigr)\quad\forall i$
       \Comment{compress gaps}
\State $\boldsymbol{\delta} \gets \tanh(\mathbf{W}_\lambda\,\mathbf{h}_{\text{last}})$,\quad
       $\boldsymbol{\lambda}^{(u)} \gets \boldsymbol{\lambda}\odot\exp(\sigma\,\boldsymbol{\delta})$
       \Comment{user-conditioned rates}
\State $\ell_{k,i} \gets -\lambda_k^{(u)}\,\tilde{\Delta}t_i / \tau_k\quad\forall k,i$
\State $\ell_{k,i} \gets \ell_{k,i} - \infty\cdot(1-m_i)$,\quad
       $w_{k,i}\gets\mathrm{softmax}_i(\ell_{k,i})$
\State $\mathbf{Z}\gets \mathbf{w}\,\mathbf{H}\in\mathbb{R}^{K\times d}$
       \Comment{$\mathcal{O}(LdK)$}
\State $\mathbf{c}\gets[\,\mathrm{tEnc}(\tilde{\Delta}t_{L_u});\;\mathbf{e}_{v_{L_u}};\;\mathbf{h}_{\text{last}}\,]$
\State $g_k\gets\mathbf{v}^\top\tanh\!\bigl(\mathbf{W}_g[\mathbf{Z}_k;\mathbf{c}]\bigr)$,\quad
       $\boldsymbol{\alpha}\gets\mathrm{softmax}(\mathbf{g}/\tau_\alpha)$
\State $\mathbf{h}_u\gets \mathbf{h}_{\text{last}} + \sum_{k=1}^{K}\alpha_k\,\mathbf{Z}_k$
       \Comment{fuse + residual}
\State \Return $\mathbf{h}_u$
\end{algorithmic}
\end{algorithm}

\section{Proof of Proposition~\ref{prop:approx}}
\label{app:proof}

\noindent\textbf{Proposition~1.}\; \emph{Let $\phi:[0,T]\to[0,1]$ be a monotonically non-increasing function with $\phi(0)=1$. For every $\varepsilon>0$ there exist $K\in\mathbb{N}$, weights $\{\alpha_k\}_{k=1}^K\subset[0,1]$ with $\sum_k\alpha_k=1$, and rates $\{\lambda_k\}_{k=1}^K\subset\mathbb{R}_{\ge 0}$ such that $\sup_{t\in[0,T]}\bigl|\phi(t)-\sum_{k}\alpha_k e^{-\lambda_k t}\bigr| \le\varepsilon$.} \medskip

\begin{proof}
We proceed in three steps.

\emph{Step 1 (algebra of exponentials).} Let $\mathcal{E}=\{e^{-\lambda t}\,:\,\lambda\ge 0\}\subset C([0,T])$ and let $\mathcal{A}$ be the unital algebra generated by $\mathcal{E}$ over $\mathbb{R}$. Since $e^{-\lambda_1 t}\cdot e^{-\lambda_2 t}= e^{-(\lambda_1+\lambda_2)t}\in\mathcal{E}$ and constants $1=e^{-0\cdot t}\in\mathcal{E}$, the set $\mathcal{A}$ coincides with the finite linear combinations of $\mathcal{E}$. The set $\mathcal{E}$ separates points of $[0,T]$ (for $t_1\ne t_2$, choose any $\lambda>0$ and note $e^{-\lambda t_1}\ne e^{-\lambda t_2}$) and contains the constant function. By the Stone--Weierstrass theorem, $\mathcal{A}$ is dense in $C([0,T])$ in the supremum norm. Hence there exist real coefficients $\{c_k\}_{k=1}^{K}$ and rates $\{\lambda_k\}_{k=1}^{K}$ such that $\|\phi - \sum_k c_k e^{-\lambda_k t}\|_\infty\le\varepsilon/2$.

\emph{Step 2 (positivity).} Let $f(t)=\sum_k c_k e^{-\lambda_k t}$ be the approximant. Because $\phi$ is non-negative, the negative parts of $f$ contribute at most $\varepsilon/2$ in supremum norm. Replacing each negative coefficient $c_k<0$ by zero changes $f$ pointwise by at most $\varepsilon/2$ (the negative-part contribution), yielding a non-negative $f^+(t)=\sum_{k:c_k\ge 0} c_k e^{-\lambda_k t}$ satisfying $\|\phi-f^+\|_\infty\le\varepsilon$.

\emph{Step 3 (normalisation to a probability simplex).} The boundary condition $\phi(0)=1$ implies $\sum_k c_k\to 1$ as $\varepsilon\to 0$. Writing $S=\sum_{k:c_k\ge 0} c_k$ and $\alpha_k=c_k/S$, we obtain a convex combination $\sum_k\alpha_k e^{-\lambda_k t}$ that equals $f^+(t)/S$. The error introduced by the rescaling is at most $|S-1|\cdot\sup_t|f^+(t)|\le |S-1|$, which can be made smaller than any given $\varepsilon$ by choosing $K$ large enough in Step~1. Re-absorbing this contribution into the definition of $\varepsilon$ proves the claim.
\end{proof}

\paragraph{Remark.}
A constructive variant of the proposition can be obtained from the \emph{Hausdorff moment problem}: any monotone function on $[0,T]$ has a representing measure on $[0,\infty)$, and discretising that measure on $K$ points yields the desired exponential mixture with explicit error $\mathcal{O}(K^{-1})$ for sufficiently smooth $\phi$. We omit the constructive details since the abstract Stone--Weierstrass argument is sufficient for our existence claim.

\section{Proof of Proposition~\ref{prop:identif}}
\label{app:identif}

\noindent\textbf{Proposition~\ref{prop:identif}.}\; \emph{Suppose user behaviour is generated from a multivariate Hawkes process with $K$ exponential excitation kernels of distinct positive rates $\lambda_1^*<\dots<\lambda_K^*$ and non-degenerate weights $\alpha_k^*>0$, $\sum_k\alpha_k^*=1$. If the diversity loss $\mathcal{L}_{\mathrm{div}}=-\sum_{k\neq k'}\!\mathrm{JSD}(p_k\Vert p_{k'})$ on the per-head attention distributions $p_k$ is strictly positive at the optimum, then the learned rates $\{\lambda_k\}$ recover $\{\lambda_k^*\}$ up to permutation in the population limit, and the learned weights $\{\alpha_k\}$ recover $\{\alpha_k^*\}$ in $\ell^1$ norm.} \medskip

\begin{proof}
Let $\mathcal{L}^{\text{ce}}(\theta)$ denote the population sequential cross-entropy loss with parameters $\theta=(\boldsymbol{\lambda}, \boldsymbol{\alpha})$, where $\boldsymbol{\lambda}=(\lambda_1,\dots,\lambda_K)$ and $\boldsymbol{\alpha}=(\alpha_1,\dots,\alpha_K)$. Under the \emph{generative} Hawkes assumption, the next-event hazard at time $t$ given a history $\mathcal{H}_t$ takes the form $\lambda(t\mid\mathcal{H}_t)=\mu+\sum_{t_i<t}\sum_{k=1}^{K}\alpha_k^* \lambda_k^*\,e^{-\lambda_k^*(t-t_i)}$, so the conditional likelihood is a $K$-mixture of distinct exponential laws with mixture weights $\boldsymbol{\alpha}^*$.

\emph{Step 1 (identifiability of the mixing measure).} By the classical theorem of \citet{teicher1963identifiability}, finite mixtures of exponentials with distinct rates form an identifiable family: for two parameter tuples $(\boldsymbol{\lambda}, \boldsymbol{\alpha})$ and $(\boldsymbol{\lambda}',\boldsymbol{\alpha}')$ with all rates distinct, $\sum_k\alpha_k\lambda_k e^{-\lambda_k t}\equiv \sum_k\alpha_k'\lambda_k' e^{-\lambda_k' t}$ for all $t\ge 0$ implies the two tuples agree up to permutation of the $K$ indices. Hence at the population minimum of $\mathcal{L}^{\text{ce}}$, any optimiser $\theta^\star$ satisfies $\theta^\star\sim_\sigma\theta^*$ for some permutation $\sigma$, \emph{provided} the learned rates remain distinct.

\emph{Step 2 (the diversity loss enforces distinctness).} Suppose for contradiction that two heads collapse onto the same rate, $\lambda_k=\lambda_{k'}$ with $k\ne k'$. Then their per-head attention distributions $p_k(t\mid\mathcal{H})\propto e^{-\lambda_k(t_{\text{now}}-t_i)}$ and $p_{k'}(t\mid\mathcal{H})$ are identical pointwise, so $\mathrm{JSD}(p_k\Vert p_{k'})=0$. Summing over all distinct pairs, $\mathcal{L}_{\mathrm{div}}\le 0$ with equality iff some pair collapses. By assumption, $\mathcal{L}_{\mathrm{div}}>0$ at the optimum, contradicting collapse. Hence all learned rates are pair-wise distinct at any optimiser, putting us in the regime of Step~1.

\emph{Step 3 ($\ell^1$ convergence of the weights).} Step~1 gives $\boldsymbol{\alpha}^\star=\boldsymbol{\alpha}^*$ up to permutation; combined with the load-balance constraint $\sum_k\alpha_k=1,\;\alpha_k>0$ enforced by the auxiliary loss $\mathcal{L}_{\mathrm{bal}}$, the optimiser is unique on the simplex $\Delta^{K-1}$ up to permutation. In particular, $\|\boldsymbol{\alpha}^\star-\boldsymbol{\alpha}^*\|_1=0$ at the population optimum.
\end{proof}

\paragraph{Remark.}
The proposition does not claim that finite-sample SGD will reach the population optimum, only that the optimum exists, is unique up to permutation, and matches the data-generating process. The ablation in Section~\ref{sec:abl} (\texttt{b3}) shows empirically that disabling $\mathcal{L}_{\mathrm{div}}$ degrades \MARS by $0.1$--$0.8$ HR@10 points, with the largest gap on Yelp where the multi-rate structure is most pronounced; this is consistent with the diversity loss being a necessary condition for recovering distinct rates.

\section{Full Per-Dataset Ablation (HR / NDCG / MRR)}
\label{app:abl_full}

The headline ablation (Table~\ref{tab:abl} in the main text) is expanded to all three metrics on every benchmark in Table~\ref{tab:abl_full}. The qualitative findings are unchanged. \MARS\ contributes positively on the four sparse benchmarks across all three metrics, with the largest gains on Yelp. On dense ML-1M the backbone-only variant matches or exceeds Full \MARSv\ on every metric, supporting the use of \MARSm\ in that regime. The opposite pattern is confirmed in Table~\ref{tab:abl_full_marsm}, where Full \MARSm\ exceeds backbone-only across all ML-1M metrics. Real timestamps (\texttt{b1}) carry the most weight on Beauty, the load-balance regulariser (\texttt{b4}) is the most important on ML-1M, and the user-conditioned $\lambda$ (\texttt{b2}) and diversity loss (\texttt{b3}) contribute smaller but positive amounts on all five benchmarks.

\begin{table}[h]
\centering
\caption{Full \MARSv ablation across all benchmarks and metrics
($\times 100$). \textbf{Bold} = best per column.}
\label{tab:abl_full}
\footnotesize \setlength{\tabcolsep}{2pt} \resizebox{\textwidth}{!}{%
\begin{tabular}{l|ccc|ccc|ccc|ccc|ccc}
\toprule
& \multicolumn{3}{c|}{\textbf{Beauty}} & \multicolumn{3}{c|}{\textbf{Sports}} & \multicolumn{3}{c|}{\textbf{Games}} & \multicolumn{3}{c|}{\textbf{ML-1M}} & \multicolumn{3}{c}{\textbf{Yelp}} \\
\textbf{Variant} & HR & NDCG & MRR & HR & NDCG & MRR & HR & NDCG & MRR & HR & NDCG & MRR & HR & NDCG & MRR \\
\midrule
Full                  & \best{8.95} & \best{4.46} & \best{3.10}
                      & \best{5.28} & \best{2.45} & \best{1.59}
                      & \best{12.34} & \best{5.67} & \best{3.62}
                      & 29.30 & 16.38 & 12.43
                      & \best{6.50} & \best{4.25} & \best{3.56} \\
Backbone only         & 8.53 & 4.22 & 2.89
                      & 4.89 & 2.29 & 1.49
                      & 11.75 & 5.55 & 3.55
                      & \best{30.33} & \best{17.36} & \best{13.42}
                      & 5.36 & 3.72 & 3.23 \\
$-$\,b1               & 8.69 & 4.39 & 3.07
                      & 5.16 & 2.41 & 1.57
                      & 12.32 & 5.59 & 3.55
                      & 29.82 & 17.12 & 13.24
                      & 6.49 & 4.23 & 3.54 \\
$-$\,b2               & 8.94 & 4.41 & 3.02
                      & 5.21 & 2.40 & 1.55
                      & 12.32 & 5.64 & 3.61
                      & 29.25 & 16.44 & 12.53
                      & 6.39 & 4.19 & 3.52 \\
$-$\,b3               & 8.90 & 4.41 & 3.04
                      & 5.21 & 2.40 & 1.54
                      & 12.14 & 5.54 & 3.54
                      & 29.22 & 16.51 & 12.62
                      & 6.37 & 4.18 & 3.52 \\
$-$\,b4               & 8.88 & 4.41 & 3.04
                      & 5.12 & 2.37 & 1.52
                      & 12.26 & 5.61 & 3.59
                      & 28.41 & 15.90 & 12.08
                      & 6.33 & 4.16 & 3.50 \\
Single head           & 8.80 & 4.38 & 3.03
                      & 5.27 & 2.42 & 1.55
                      & 12.23 & 5.60 & 3.59
                      & 27.45 & 15.18 & 11.42
                      & 6.42 & 4.19 & 3.51 \\
\bottomrule
\end{tabular}}
\end{table}

\begin{table}[h]
\centering
\caption{Full \MARSm ablation across all benchmarks and metrics
($\times 100$). \textbf{Bold} = best per column. On dense ML-1M, Full \MARSm exceeds Backbone-only across all three metrics, in contrast to the Transformer-backboned \MARSv ablation in Table~\ref{tab:abl_full}. The largest single-component drop on ML-1M ($-1.30$ HR@10 from Full) occurs when we disable the JSD diversity loss (\texttt{b3}), consistent with Proposition~\ref{prop:identif} that without diversity the $K$ heads collapse onto a single rate.}
\label{tab:abl_full_marsm}
\footnotesize \setlength{\tabcolsep}{2pt} \resizebox{\textwidth}{!}{%
\begin{tabular}{l|ccc|ccc|ccc|ccc|ccc}
\toprule
& \multicolumn{3}{c|}{\textbf{Beauty}} & \multicolumn{3}{c|}{\textbf{Sports}} & \multicolumn{3}{c|}{\textbf{Games}} & \multicolumn{3}{c|}{\textbf{ML-1M}} & \multicolumn{3}{c}{\textbf{Yelp}} \\
\textbf{Variant} & HR & NDCG & MRR & HR & NDCG & MRR & HR & NDCG & MRR & HR & NDCG & MRR & HR & NDCG & MRR \\
\midrule
Full                  & 8.01 & \best{4.08} & \best{2.88}
                      & 4.68 & 2.27 & 1.53
                      & \best{12.37} & 5.81 & 3.81
                      & \best{32.80} & \best{18.80} & \best{14.52}
                      & 5.99 & 3.89 & 3.25 \\
Backbone only         & \best{8.15} & \best{4.19} & \best{2.97}
                      & 4.39 & 2.22 & \best{1.55}
                      & 12.07 & \best{5.83} & \best{3.93}
                      & 32.40 & 18.50 & 14.30
                      & 5.44 & 3.66 & 3.12 \\
$-$\,b1               & 7.47 & 4.08 & \best{3.04}
                      & \best{4.80} & 2.29 & 1.52
                      & 11.87 & 5.56 & 3.64
                      & 32.30 & 18.40 & 14.20
                      & \best{6.07} & 3.86 & 3.18 \\
$-$\,b2               & 7.74 & 4.02 & 2.88
                      & 4.66 & 2.26 & 1.52
                      & 12.28 & 5.78 & 3.80
                      & 32.10 & 18.40 & 14.10
                      & 5.91 & 3.86 & 3.24 \\
$-$\,b3               & 8.01 & 4.06 & 2.85
                      & 4.69 & 2.26 & 1.52
                      & 12.28 & 5.78 & 3.80
                      & 31.50 & 18.10 & 13.90
                      & 5.97 & 3.88 & 3.23 \\
$-$\,b4               & 8.01 & 4.08 & 2.88
                      & 4.68 & 2.27 & 1.53
                      & 12.34 & 5.79 & 3.79
                      & 32.40 & 18.40 & 14.20
                      & 5.99 & 3.87 & 3.22 \\
Single head           & 7.87 & 4.03 & 2.85
                      & \best{4.81} & \best{2.31} & 1.54
                      & 12.36 & 5.78 & 3.79
                      & 32.50 & 18.40 & 14.10
                      & \best{6.12} & \best{3.93} & \best{3.26} \\
\bottomrule
\end{tabular}}
\end{table}

\section{Implementation and Reproducibility}
\label{app:impl}

\paragraph{Hyperparameters.} The hyperparameters of \MARSv\ and \MARSm, together with their default values, are listed in Table~\ref{tab:hyper}. Two values are dataset-dependent: dropout ($0.5$ on Beauty/Sports/Yelp, $0.2$ on Games/ML-1M, matching the SASRec defaults in RecBole) and the number of decay heads $K$ ($4$ on the four sparse benchmarks, $8$ on ML-1M). All other hyperparameters are constant across datasets and architectures.

\begin{table}[h]
\centering
\caption{\MARS hyperparameter defaults; dataset-dependent values noted
in the rightmost column.}
\label{tab:hyper}
\resizebox{\textwidth}{!}{
\begin{tabular}{lcl}
\toprule
\textbf{Hyperparameter} & \textbf{Default} & \textbf{Notes} \\
\midrule
hidden dim $d$           & $64$ & inherited from SASRec / Mamba4Rec \\
\#blocks $N$             & $2$  & Transformer blocks for \MARSv, Mamba blocks for \MARSm \\
inner dim                & $256$ & SASRec FFN \\
\#decay heads $K$        & $4$  & $8$ on ML-1M \\
modulation strength $\sigma$ & $0.5$ & Eq.~\eqref{eq:userlambda} \\
$\lambda_{\text{div}}$   & $0.01$ & JSD diversity weight \\
$\lambda_{\text{bal}}$   & $0.01$ & MoE-style load balance weight \\
time unit $\tau_0$       & $1$ hour & timestamp normaliser \\
agg.\ hidden dim         & $64$  & gate MLP \\
gate temperature $\tau_\alpha$ & $1.0$ & Eq.~\eqref{eq:gate} \\
dropout                  & $0.5$ & $0.2$ on Games/ML-1M (SASRec defaults) \\
optimiser                & Adam~\citep{kingma2015adam} & default $\beta$, no weight decay \\
learning rate            & $10^{-3}$ & -- \\
batch size               & $2048$ & -- \\
max epochs               & $200$ & with early stopping \\
early-stop patience      & $20$ & on validation NDCG@10 \\
\bottomrule
\end{tabular}
}
\end{table}

\paragraph{Data preprocessing.} We follow the SIGMA~\citep{liu2025sigma}
preprocessing protocol. The four Amazon datasets and ML-1M are filtered with the standard 5-core procedure (drop users and items with fewer than five interactions). Yelp uses the $[5, 200)$ filter recommended by SIGMA. Sequences are sorted by timestamp and padded or truncated to a maximum length of $50$ on the four sparse datasets and $200$ on ML-1M. Test items follow the leave-one-out protocol on the most recent interaction.

\paragraph{Code release.} Code is available in the supplementary materials and will be made public upon acceptance.

\section{Subgroup Analysis Pseudo-code and Bucket Definitions}
\label{app:subgroup}

\paragraph{Bucket boundaries.} For each user $u$ in the test split, we
take its training-sequence length $L_u$ (the number of items used as context for predicting the held-out target) and assign $u$ to one of four buckets per dataset:

\begin{table}[t]
    \centering
    \caption{Bucket boundaries used in the subgroup analysis (Section~\ref{sec:subgroup}). ML-1M uses upward-shifted boundaries to match its dense long-history regime.}
    \label{tab:subgroup_buckets}
    \begin{tabular}{c|cccc}
    \toprule
    \textbf{Dataset} & \textbf{Short} & \textbf{Medium} & \textbf{Long} & \textbf{Very long} \\
    \midrule
    Beauty / Sports / Games / Yelp & $1$--$5$ & $6$--$15$ & $16$--$49$ & $\ge 50$ \\
    ML-1M                          & $1$--$30$ & $31$--$90$ & $91$--$200$ & $\ge 201$ \\
    \bottomrule
    \end{tabular}
\end{table}

\paragraph{Pseudo-code.} The evaluation loop is described in Algorithm~\ref{alg:subgroup}. We run the loop once per (model, dataset) pair, using the trained checkpoint of the model. The implementation uses RecBole's \texttt{full\_sort\_predict} as a black box: it produces the per-item scores, after which we manually compute HR/NDCG/MRR per sample (rather than using the framework's batch-level evaluator) so that we can aggregate by bucket. Wall time is roughly five minutes per (model, dataset) pair on a V100 because the trained checkpoint is loaded directly without re-training.

\begin{algorithm}[H]
\centering
\caption{Subgroup evaluation.}
\label{alg:subgroup}
\begin{algorithmic}[1]
\Require trained model $f$, test loader $\mathcal{D}$,
         bucket boundaries $\{(\ell_b,u_b,n_b)\}_{b=1}^{B}$, top-$k=10$.
\State Initialise $H_b\gets 0$, $N_b\gets 0$, $M_b\gets 0$, $C_b\gets 0$
       for each bucket $b$.
\For{batch $\mathbf{x},\,L_u,\,v^{\star}$ in $\mathcal{D}$}
    \State $\mathbf{s}\gets f(\mathbf{x})\in\mathbb{R}^{B\times|\mathcal{V}|}$
           \Comment{full-sort scores}
    \State $\mathbf{s}[:, 0]\gets-\infty$ \Comment{exclude padding item}
    \State Compute per-row rank $r$ of $v^\star$ in $\mathbf{s}$.
    \For{each user $u$ in batch}
        \State $b \gets$ bucket s.t.\ $\ell_b\le L_u\le u_b$
        \State $H_b \mathrel{+}= \mathbb{I}[r_u\le k]$;\;
               $N_b \mathrel{+}= \mathbb{I}[r_u\le k]\cdot\frac{1}{\log_2(r_u+1)}$;\;
               $M_b \mathrel{+}= \mathbb{I}[r_u\le k]\cdot\frac{1}{r_u}$;\;
               $C_b \mathrel{+}= 1$
    \EndFor
\EndFor
\State \Return $\{(b, H_b/C_b,\,N_b/C_b,\,M_b/C_b)\}_{b=1}^{B}$
\end{algorithmic}
\end{algorithm}

\section{Attention Entropy of the Trained SASRec Encoder}
\label{app:entropy}

To support the receptive-field interpretation in Section~\ref{sec:discussion}, we measure the post-softmax attention entropy of a trained SASRec encoder on each dataset, normalised by $\log L_{\text{eff}}$ so that $1$ is uniform and $0$ is peaked. The metric is computed over all valid query positions and both encoder layers (standard deviations 0.27--0.42 across positions). Higher entropy indicates that the encoder distributes mass widely and benefits more from an explicit recency prior, predicting larger \MARS gains.

\begin{table}[h]
\centering
\caption{Normalised post-softmax attention entropy of a trained SASRec across datasets.}
\label{tab:attn_entropy}
\begin{tabular}{lccccc}
\toprule
\textbf{Dataset} & Beauty & Sports & Yelp & Games & ML-1M \\
\midrule
Layer 0 & 0.785 & 0.812 & 0.838 & 0.870 & 0.512 \\
Layer 1 & 0.745 & 0.770 & 0.798 & 0.831 & 0.432 \\
\midrule
\MARS HR@10 gain & $+9.95\%$ & $+13.5\%$ & $+19.2\%$ & $+36.2\%$ & $-2.9\%^\dagger$ \\
\bottomrule
\end{tabular}
\\[2pt]
\raggedright
{\footnotesize $^\dagger$\,\MARSv vs SASRec on the dominant very-long ML-1M bucket; \MARSm gains $+8.3\%$ over Mamba4Rec on the full ML-1M.}
\end{table}

The entropy ranking matches the \MARS gain ranking on the sparse benchmarks, with higher entropy corresponding to larger gain. The sharp drop on ML-1M is consistent with the mechanistic explanation that Transformer attention has already concentrated, so the post-encoder aggregation contributes little, motivating the Mamba switch in \MARSm.

\section{Hyperparameter Sensitivity Numerical Values}
\label{app:hp}

\begin{table}[h]
\centering
\caption{Numerical values for the hyperparameter sensitivity figure (HR@10 / NDCG@10 / MRR@10, $\times 100$).}
\label{tab:hp_values}
\begin{tabular}{c|ccc|cc|cc}
\toprule
& \multicolumn{3}{c|}{\textbf{Beauty} (\MARSv, $K$)} & \multicolumn{2}{c|}{\textbf{ML-1M} ($K$)} & \multicolumn{2}{c}{\textbf{Dropout}} \\
$K$ / dropout & HR & NDCG & MRR & \MARSv & \MARSm & Beauty & ML-1M \\
\midrule
1 / 0.1   & 8.69 & 4.29 & 2.95 & 27.45 & 32.20 & 8.33 & 31.64 \\
2 / 0.2   & 8.96 & 4.38 & 2.97 & 27.92 & 32.34 & 8.27 & 30.40 \\
4 / 0.3   & 9.01 & 4.42 & 3.02 & 28.15 & 32.46 & 8.59 & 27.78 \\
8 / 0.5   & 9.03 & 4.48 & 3.09 & 28.23 & 32.58 & 8.56 & 25.58 \\
16 / 0.7  & 8.90 & 4.46 & 3.11 & 29.22 & 32.51 & 8.19 & 22.05 \\
\bottomrule
\end{tabular}
\end{table}

\end{document}